\newtheorem{theorem}{Theorem}
\newtheorem{lemma}{Lemma}
\newtheorem{definition}{Definition}
\newcommand{\comment}[1]{}
\def\ISIT{0}
\def\bX{{\bf X}}
\def\bY{{\bf Y}}
\def\bZ{{\bf Z}}
\def\bx{{\bf x}}
\def\by{{\bf y}}
\def\bW{{\bf W}}
\def\bhW{{\bf \hat{W}}}
\def\CR{\dot{R}}
\def\CC{\dot{C}}
\def\CCP{\dot{C}_{\bot}}
\def\by{{\bf y}}
\def\bc{{\bf c}}
\def\bw{{\bf w}}
\def\cA{\mbox{$\cal{A}$}}
\def\cC{\mbox{$\cal{C}$}}
\def\cE{\mbox{$\cal{E }$}}
\def\cB{\mbox{$\cal{B}$}}
\def\cX{\mbox{$\cal{X}$}}
\def\cY{\mbox{$\cal{Y}$}}
\def\cW{\mbox{$\cal{W}$}}
\def\cI{\mbox{$\cal{I}$}}
\def\cS{\mbox{$\cal{S}$}}
\def\cW{\mbox{$\cal{W}$}}
\def\cE{\mbox{$\cal{E}$}}
\def\cN{\mbox{$\cal{N}$}}
\def\d{\mathrm{d}}
\begin{document}
 \IEEEoverridecommandlockouts
 \title{Capacity per Unit-Energy of\\ Gaussian Many-Access Channels}
 \author{
 \IEEEauthorblockN{Jithin~Ravi and Tobias~Koch}\\
 \IEEEauthorblockA{Universidad Carlos III de Madrid, Legan\'es, Spain and Gregorio Mara\~n\'on Health Research Institute, Madrid, Spain\\
       Email: {\tt \{rjithin,koch\}@tsc.uc3m.es}}
 \thanks{J.~Ravi and T.~Koch have received funding from the European Research Council (ERC) under the European Union's Horizon 2020 research and innovation programme (Grant No.~714161). T.~Koch has further received funding from the Spanish Ministerio de Econom\'ia y Competitividad under Grants RYC-2014-16332 and TEC2016-78434-C3-3-R (AEI/FEDER, EU).}
 }

\maketitle
  \begin{abstract}
 We consider a Gaussian multiple-access channel where the number of transmitters grows with the blocklength $n$. For this setup, the maximum number of bits that can be transmitted reliably per unit-energy is analyzed. 
 We show that if the number of users is of an order strictly above $n/\log n$, then the users cannot achieve any positive rate per unit-energy.  In contrast, if the number of users is of order strictly below $n/\log n$, then each user can achieve the single-user capacity per unit-energy $(\log e)/N_0$ (where $N_0/ 2$ is the noise power) by using an orthogonal access scheme such as time division multiple access. We further demonstrate that orthogonal codebooks, which achieve the capacity per unit-energy when the number of users is bounded, can be strictly suboptimal.
 \end{abstract}

 \section{Introduction}
 The capacity per unit-energy $\CC$ is defined as the largest number of bits per unit-energy that can be transmitted reliably over a channel. Verd\'u \cite{Verdu90} showed that $\CC$ can be obtained from the capacity-cost function $C(P)$, defined as the largest number of bits per channel use that can be transmitted reliably with average power per symbol not exceeding $P$, as $\CC=\sup_{P>0} C(P)/P$. For the Gaussian channel with noise power $N_0/2$, this is equal to $\frac{\log e}{N_0}$. Verd\'u further showed that the capacity per unit-energy can be achieved by a codebook that is orthogonal in the sense that the nonzero components of different codewords do not overlap. In general, we shall say that a codebook is orthogonal if the inner product between different codewords is zero. The two-user Gaussian multiple access channel (MAC) was also studied in~\cite{Verdu90}, and it was demonstrated that both users can achieve the single-user capacity per unit-energy by timesharing the channel between the users, i.e., while one user transmits the other user remains silent. This is an orthogonal access scheme in the sense that the nonzero components of codewords of different users do not overlap. In general, we shall say that an access scheme is orthogonal if the inner product between codewords of different users is zero.\footnote{Note, however, that in an orthogonal access scheme the codebooks are not required to be orthogonal. That is, codewords of different codebooks are orthogonal to each other, but codewords of the same codebook need not be.} To summarize, in a two-user Gaussian MAC both users can achieve the rate per unit-energy $\frac{\log e}{N_0} $ by combining an orthogonal access scheme with orthogonal codebooks. This result can be directly generalized to any finite number of users.

The picture changes when the number of users grows without bound with the blocklength $n$. This scenario was investigated recently by Chen \emph{et al.} \cite{ChenCG17}, who referred to such a channel model as a many-access channel (MnAC). Specifically, the MnAC was introduced to model systems consisting of a single receiver and many transmitters, the number of which is comparable to or even larger than the blocklength. This situation could, \emph{e.g.}, occur in a machine-to-machine communication system with many thousands of devices in a given cell. In \cite{ChenCG17}, Chen \emph{et al.} considered a Gaussian MnAC with $k_n$ users and determined the number of messages $M_n$ each user can transmit reliably with a codebook of average power not exceeding $P$. In particular, they showed that the largest sequence $\{M_n\}$ such that the error probability vanishes as $n$ tends to infinity satisfies $\log M_n = \frac{n}{2k_n} \log(1+k_n P) + o(n\log(1+k_n P)/k_n)$. This implies that the per-user rate $(\log M_n)/n$ vanishes as $n\to\infty$ unless $k_n$ is bounded in $n$.

In this paper, we study the capacity per unit-energy of the Gaussian MnAC. We show that, in contrast to the per-user rate, the per-user rate per unit-energy can converge to a positive value as $n\to\infty$ even if $k_n$ grows without bound. Specifically, we demonstrate that, if the order of growth of $k_n$ is strictly below $n/ \log n$, then each user can achieve the capacity per unit-energy $\frac{\log e}{N_0}$ by an orthogonal access scheme. Conversely, if the order of growth of $k_n$ is strictly above $n/ \log n$, then the capacity per unit-energy is zero. Thus, there is a sharp transition between orders of growth where interference-free communication is feasible and orders of growth where reliable communication at any positive rate per unit-energy is infeasible. We further characterize the largest rate per unit-energy that can be achieved with an orthogonal access scheme and orthogonal codebooks. Our characterization shows that orthogonal codebooks are only optimal if $k_n$ grows more slowly than any positive power of $n$.

The paper is organized as follows. In Section~\ref{Sec_model}, we define the problem and introduce some preliminary notations. In Section~\ref{sec_converse}, we present the converse result when the order of $k_n$ is strictly above $n/ \log n$. In Section~\ref{sec_achievbl}, we present the achievability result when the order of $k_n$ is strictly below $n/\log n$. In Section~\ref{sec_ortho}, we analyze the performance of orthogonal codebooks.

\section{Problem Formulation and Preliminaries}
\label{Sec_model}
\subsection{Model and Definitions}
\label{Sec_Def}
Suppose there are $k$ users that wish to transmit their messages $W_i, i=1,\ldots,k$, which are assumed to be independent and uniformly distributed on $\{1,\ldots,M_n^{(i)}\}$, to one common receiver; see Fig.~\ref{Fig_many_acc}. 
To achieve this, they send a codeword of $n$ symbols over the channel. We refer to $n$ as the blocklength. 
We consider a many-access scenario where the number of users $k$ grows with $n$, hence, we denote it as $k_n$.

We further consider a Gaussian channel model where, for $k_n$ users and blocklength $n$, the received vector $\bY$ is given by
\begin{align*}
\bY & = \sum_{i=1}^{k_n} \bX_i(W_i) + \bZ. 
\end{align*}
Here $ \bX_i(W_i)$ is the $n$-length transmitted codeword from user $i$ for message $W_i$ and $\bZ$ is 
a vector of $n$ i.i.d. Gaussian components $Z_j \sim \cN(0, N_0/2)$ independent of $\bX_i$.
We denote the vector of all transmitted codewords by \mbox{$\bX := (\bX_1,\bX_2,\ldots,\bX_{k_n})$}. 
\begin{figure}[]
	\centering
	\includegraphics[scale=0.75]{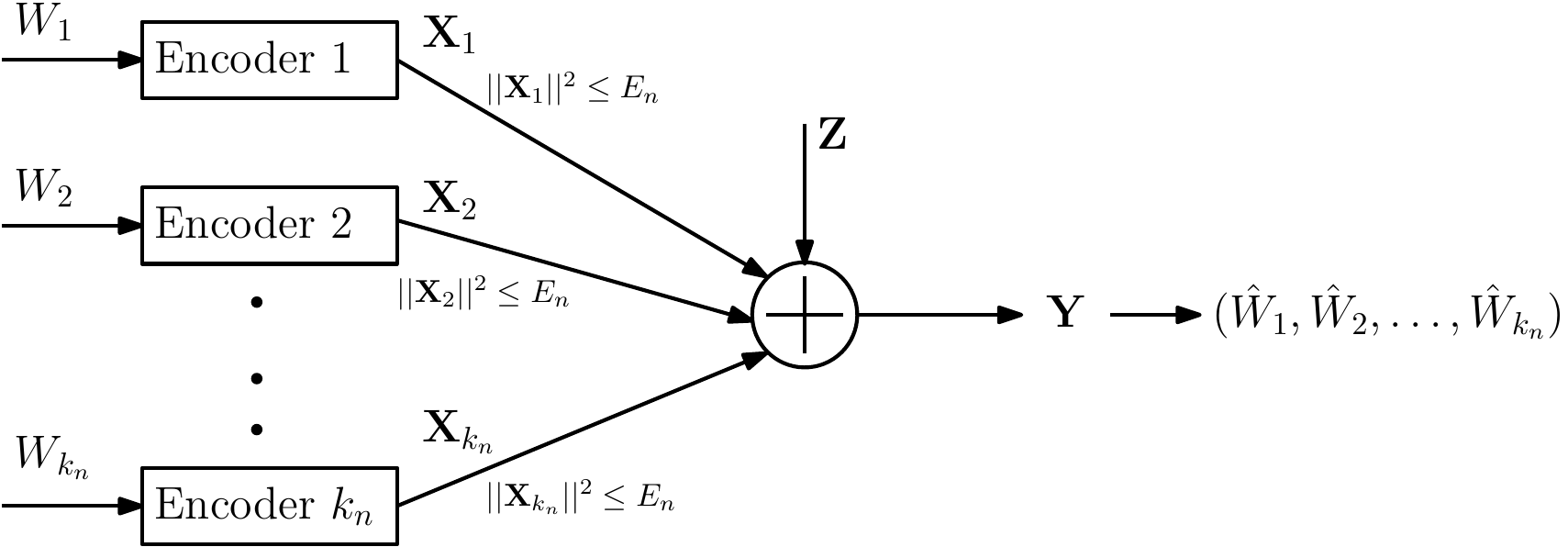}
	\caption{Many-access channel with $k_n$ users at blocklength $n$ }
	\label{Fig_many_acc}
\end{figure}
\begin{definition}
	\label{Def_nMCode}
	For $0 \leq \epsilon < 1$, an  $\bigl(n,\bigl\{M_n^{(\cdot)}\bigr\},\bigl\{E_n^{(\cdot)}\bigr\}, \epsilon\bigr)$-code for the Gaussian many-access channel consists of:
	\begin{enumerate}
		\item Encoding functions $f_i: \{ 1,\ldots,M_n^{(i)}\} \rightarrow \cX^n, i =1,\ldots, k_n$ which map user $i$'s message to the codeword $\bX_i(W_i)$, satisfying the energy constraint
		\begin{align}
		\label{Eq_energy_consrnt}
		\sum_{j=1}^{n} x_{ij}^2(w_i) \leq E_n^{(i)}, 
		\end{align}
		where $x_{ij}$ is the $j$th symbol of the transmitted codeword.
		\item Decoding function $g: \cY^n \rightarrow\{M_n^{(\cdot)}\}$ which maps the received vector $\bY$ to the messages of all users and whose average probability of error satisfies
	\end{enumerate} 
		\begin{align*}
		P_e^{(n)} := P\{ g(\bY) \neq (W_1,\ldots,W_{k_n}) \} \leq \epsilon.
		\end{align*}
\end{definition}
We shall say that
the $\bigl(n,\bigl\{M_n^{(\cdot)}\bigr\},\bigl\{E_n^{(\cdot)}\bigr\}, \epsilon\bigr)$-code is symmetric if $M_n^{(i)} = M_n$ and $E_n^{(i)} = E_n$ for all $i=1, \ldots, k_n$. For compactness, we denote a symmetric code by $(n, M_n, E_n, \epsilon)$. In this paper, we restrict ourselves to symmetric codes.

\begin{definition}
	\label{Def_Sym_Rate_Cost}
	For a symmetric code, the rate per unit-energy $\CR$  is said to be $\epsilon$-achievable if 
	for every $\alpha > 0$, there exists an $n_0$ such that if $n \geq n_0$, then an $(n,M_n,E_n, \epsilon)$-code can be found whose rate per unit-energy satisfies $\frac{\log M_n}{ E_n} > \CR - \alpha$. Furthermore, $\CR$ is said to be achievable if it is $\epsilon$-achievable for all $0 < \epsilon < 1$. The capacity per unit-energy $\CC$ is the supremum of all achievable rates per unit-energy.
\end{definition}

\subsection{Order Notations}
Let $\{a_n\}$ and $\{b_n\}$ be two sequences of nonnegative real numbers.
We write $a_n = O(b_n)$  if there exists an $n_0$ and a positive real number $S$ such that for all $n \geq n_0$, $a_n \leq S b_n$. We write $a_n = o(b_n)$ if $ \lim\limits_{n\rightarrow \infty} \frac{a_n}{b_n} = 0$, and $a_n = \Omega(b_n)$ if $\liminf\limits_{n \rightarrow \infty} \frac{a_n}{b_n} >0$. 
Similarly, $a_n = \Theta (b_n)$ indicates that there exist $ 0 < l_1<l_2$ and $n_0$ such that 
$l_1 b_n \leq a_n \leq l_2 b_n$
for all $n \geq n_0$.  Finally, we write
 $a_n = \omega (b_n)$ if $\lim\limits_{n\rightarrow \infty}  \frac{a_n}{b_n} = \infty$.


 \section{Infeasible Order of Growth}
\label{sec_converse}

We shall refer to orders of $k_n$ for which no positive rate per unit-energy is achievable as infeasible orders of growth. 
In the next theorem, we show that any order of growth which is strictly above $n/\log n$ is infeasible.
\begin{theorem}
	\label{Thm_Infeasble}
	 If $k_n =\omega(n / \log n)$, then $\CC =0$. In words, if the order of $k_n$ is strictly above $n/\log n$, then no coding scheme achieves a positive rate per unit-energy.
\end{theorem}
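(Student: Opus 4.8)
The plan is to argue by contradiction: assuming a rate per unit-energy $\CR=\rho>0$ is achievable, I would extract from the code a pair of incompatible bounds on the per-user energy $E_n$. Fix any $\epsilon\in(0,1)$. By Definition~\ref{Def_Sym_Rate_Cost} with $\alpha=\rho/2$, achievability yields, for all large $n$, an $(n,M_n,E_n,\epsilon)$-code with $\log M_n/E_n>\rho/2$. Because the messages are independent and $(W_1,\dots,W_{k_n})\to\bX\to\bY$ is a Markov chain, Fano's inequality applied to the entire message tuple gives
\begin{equation}
\label{Eq_plan_fano}
(1-\epsilon)\,k_n\log M_n\le I(\bX;\bY)+1 .
\end{equation}

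First I would turn \eqref{Eq_plan_fano} into an \emph{upper} bound on $E_n$. Setting $\bS=\sum_{i=1}^{k_n}\bX_i$, the chain $\bX\to\bS\to\bY$ and the independence of the codewords make the coordinatewise variances add, so $\sum_{j=1}^{n}\mathrm{Var}(S_j)=\sum_{i=1}^{k_n}\sum_{j=1}^{n}\mathrm{Var}(X_{ij})\le k_nE_n$. Bounding each $h(Y_j)$ by the Gaussian maximum entropy and invoking Jensen's inequality then gives
\begin{equation}
\label{Eq_plan_var}
I(\bX;\bY)\le\frac{n}{2}\log\Bigl(1+\frac{2k_nE_n}{nN_0}\Bigr).
\end{equation}
Writing $u:=2k_nE_n/(nN_0)$ and combining \eqref{Eq_plan_fano}--\eqref{Eq_plan_var}, the rate per unit-energy obeys $\log M_n/E_n\le\frac{1}{(1-\epsilon)N_0}\,\frac{\log(1+u)}{u}+o(1)$. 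Since $\log(1+u)/u\to0$ as $u\to\infty$, the standing hypothesis $\log M_n/E_n>\rho/2$ forces $u=O(1)$, i.e.\ $E_n=O(n/k_n)$. As $k_n=\omega(n/\log n)$ is equivalent to $n/k_n=o(\log n)$, this yields
\begin{equation}
\label{Eq_plan_upper}
E_n=o(\log n).
\end{equation}

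The core of the proof is a matching \emph{lower} bound on $E_n$ forced by reliability. The guiding intuition is that decoding all $k_n$ users at a fixed joint error $\epsilon$ drives each user's individual error down like $1/k_n$: were the $k_n$ single-user decisions roughly independent, $\pr{\text{all correct}}\approx\prod_i(1-\delta_i)$ could exceed $1-\epsilon$ only if $\sum_i\delta_i=O(1)$, forcing $\delta_i=O(1/k_n)$ typically. For the Gaussian channel the error in resolving two energy-$E_n$ codewords of a single user is at least of order $Q(\sqrt{2E_n/N_0})$, so $\delta_i\lesssim1/k_n$ would require $Q(\sqrt{2E_n/N_0})\lesssim1/k_n$ and hence, using $Q^{-1}(x)\sim\sqrt{2\ln(1/x)}$,
\begin{equation}
\label{Eq_plan_lower}
E_n\ge(1-o(1))\,N_0\ln k_n ,
\end{equation}
where $k_n\ge n/\log n$ gives $\log k_n\ge(1-o(1))\log n$, so that the right-hand side is $\Omega(\log n)$. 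Bounds \eqref{Eq_plan_upper} and \eqref{Eq_plan_lower} clash for large $n$, contradicting the assumption $\rho>0$ and establishing $\CC=0$.

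The step I expect to be the genuine obstacle is making \eqref{Eq_plan_lower} rigorous for \emph{arbitrary} symmetric codes, because the independence-of-errors heuristic is not justified in general: a cleverly correlated codebook might keep the joint error small while holding some marginal errors well above $1/k_n$. To remove this assumption I would replace the heuristic with a change-of-measure (meta-converse) lower bound on $P_e^{(n)}$, comparing the output law induced by $\bX(W_1,\dots,W_{k_n})$ against the output law $P_{\bY\mid\mathbf 0}$ of the all-zero input and accumulating the per-user log-likelihood-ratio increments; the energy constraint $\sum_j X_{ij}^2\le E_n$ caps each increment, while the $k_n$-fold accumulation is precisely what produces the $\ln k_n$ term in \eqref{Eq_plan_lower}. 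Showing that this accumulation persists without assuming the codebooks are mutually orthogonal is the crucial and most delicate point.
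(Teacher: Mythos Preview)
Your upper bound argument is correct and matches the paper: Fano's inequality plus the Gaussian mutual-information bound \eqref{Eq_plan_var} force $k_nE_n/n=O(1)$ whenever the rate per unit-energy stays bounded away from zero, hence $E_n=O(n/k_n)=o(\log n)$ once $k_n=\omega(n/\log n)$.

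The genuine gap is exactly where you flag it. Your independence-of-errors heuristic cannot be salvaged at the marginal level: from $P_e^{(n)}\le\epsilon$ one only obtains $\delta_i\le\epsilon$ for each $i$ (since $P_e^{(n)}\ge\max_i\delta_i$), and combining this with the binary-hypothesis bound $\delta_i\ge Q(\sqrt{2E_n/N_0})$ gives merely $E_n\ge\tfrac{N_0}{2}\bigl[Q^{-1}(\epsilon)\bigr]^2=\Theta(1)$, not $\Omega(\log k_n)$. Your proposed meta-converse against the all-zero reference does not obviously rescue this either: the relevant divergence $D(P_{\bY|\bX(\bw)}\|P_{\bY|\mathbf 0})=\|\sum_i\bx_i(w_i)\|^2/N_0$ is of order $k_nE_n$ (or up to $k_n^2E_n$ for aligned codewords), and a bound of the form $k_n\log M_n\lesssim D$ simply reproduces the Fano estimate you already used for the upper bound; it does not isolate a $\log k_n$ term.

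The paper closes the gap by a different device. Rather than working with marginal errors or a single reference measure, it applies Birg\'e's inequality (a multi-hypothesis Fano-type bound) to \emph{local} subsets of the message space: one builds a partition $\{\cS_d\}$ of $\{1,\dots,M_n\}^{k_n}$ so that every cell has a center from which all its members lie at Hamming distance at most $2$, hence any two members of a cell differ in at most four users' messages. This caps the pairwise KL between the corresponding output laws at $64E_n/N_0$, while each cell still contains at least $1+k_n(M_n-1)$ tuples. Birg\'e's inequality then gives
\[
P_e^{(n)}\ \ge\ 1-\frac{64E_n/N_0+\log 2}{\log\bigl(k_n(M_n-1)\bigr)},
\]
which forces $E_n=\Omega(\log k_n+\log M_n)=\Omega(\log k_n)$ as soon as $P_e^{(n)}$ is bounded away from $1$. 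The idea your plan is missing is this \emph{localization}: restricting to alternatives that perturb only $O(1)$ users keeps the divergence at $O(E_n)$ while still leaving $\Theta(k_nM_n)$ competing hypotheses, and it is the logarithm of that count---not any independence of per-user errors---that produces the $\log k_n$ in \eqref{Eq_plan_lower}.
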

\begin{proof}
	Let $\bW$ and $\bhW$ denote the vectors $(W_1,\ldots, W_{k_n})$ and $(\hat{W_1},\ldots,\hat{W}_{k_n})$, respectively. Then 
	\begin{align}
	k_n \log M_n& = H(\bW) \nonumber\\
	& = H(\bW|\bhW)+I(\bW;\bhW)\nonumber\\
	& \leq 1+P_e^{(n)}k_n \log M_n + I(\bX;\bY),  \nonumber 
	\end{align}
	by Fano's inequality and the data processing inequality. By following~\cite[Section~9.2]{CoverJ06},
	it can be shown that for the Gaussian channel $I(\bX;\bY) \leq  \frac{n}{2} \log \left(1+\frac{2 k_nE_n}{nN_0}\right)$. Consequently,
	\begin{equation}
	 \frac{\log M_n }{E_n} \leq \frac{1}{k_nE_n}+ \frac{ P_e^{(n)} \log M_n}{E_n}  + \frac{n}{2 k_nE_n} \log \left(\!1+\frac{ 2k_nE_n}{nN_0}\!\right)\!. \notag
	\end{equation}
	This implies that the rate per unit-energy $\CR=(\log M_n)/E_n$ is upper-bounded by 
	\begin{align}
	\CR\leq \frac{ \frac{1}{k_nE_n} + \frac{n}{2 k_nE_n}\log(1+\frac{ 2k_nE_n}{nN_0})}{1 -P_e^{(n)}}.\label{Eq_R_avg}
	\end{align}
	
	We next show by contradiction that if $k_n =\omega(n / \log n)$, then $P_e^{(n)} \to 0$ as $n \to \infty$ only if $\CC=0$. Thus, assume that $k_n =\omega(n / \log n)$ and that there exists a code with rate per unit-energy $\CR >0$ such that $P_e^{(n)} \to 0$ as $n \to \infty$.
	\if \ISIT 1
	To prove that there is a contradiction we need the following lemma whose proof can be found in the extended version of the paper~\cite{RaviK19}.
	\else 
	To prove that there is a contradiction we need the following lemma. 
	\fi
	\begin{lemma}
		\label{Lem_energy_infty}
		If $M_n \geq 2$, then  $P_{e}^{(n)}  \to 0$ only if $E_n \to \infty$.
	\end{lemma}
\begin{proof}See Appendix~\ref{Sec_energy_infty}.
\end{proof}

By the assumption $\CR > 0$, we have that $M_n \geq 2$.
Since we further assumed that 
 $P_{e}^{(n)}  \to 0$, Lemma~\ref{Lem_energy_infty} implies that $E_n \to \infty$. 
 Together with \eqref{Eq_R_avg},  this in turn implies that  $\CR > 0$ is only possible if
 $k_nE_n/n$ is bounded in $n$. Thus,
	\begin{align}
	E_n = O(n/k_n). \label{Eq_energy_bnd}
	\end{align}
The next lemma presents another condition on the order of $E_n$ which contradicts \eqref{Eq_energy_bnd}.
	\begin{lemma}
		\label{Lem_convrs_err_prob}
		 If $\CR > 0$, then $P_{e}^{(n)} \to 0$ only if $E_n = \Omega(\log k_n)$.
	\end{lemma}
\if \ISIT 1
\begin{proof}
See appendix.
\end{proof}
\else
\begin{proof}
	See Appendix~\ref{Sec_Enrgy_log}.
\end{proof}
\fi

We finish the proof by showing that, if $k_n =\omega(n / \log n)$, then there exists no sequence $\{E_n\}$ of order $\Omega(\log k_n)$ that satisfies \eqref{Eq_energy_bnd}. Indeed, $E_n = \Omega(\log k_n)$ and $k_n =\omega(n / \log n)$ imply that
\begin{align}
E_n =\Omega(\log n), \label{Eq_energy_bnd1}
\end{align}
because the order of $E_n$ is lower-bounded by the order of  $\log n - \log \log n$, and  $\log n - \log \log n = \Theta(\log n)$. Furthermore, $E_n = O(n/ k_n)$ and $k_n =\omega(n / \log n)$ imply that 
\begin{align}
E_n & = o(\log n). \label{Eq_energy_bnd2}
\end{align}
Since no sequence $\{E_n\}$ can simultaneously satisfy \eqref{Eq_energy_bnd1} and \eqref{Eq_energy_bnd2}, it follows that, if  $k_n = \omega(n/\log n)$, then no positive rate per unit-energy is achievable.
\end{proof}

 \section{Feasible Order of Growth}
\label{sec_achievbl}

In this section, we show that if the order of 
the growth of $k_n$ is strictly below $n/\log n$, then each user can achieve the single-user capacity per unit-energy $\frac{\log e}{N_0}$. 
Hence, in this case, the users can communicate as if free of interference.
The achievability uses  an orthogonal access scheme, where only one user transmits, all other users remain silent. For further reference, the probability of correct decoding of any orthogonal access scheme is given by 
\begin{align*}
P_c^{(n)} = \prod_{i=1}^{k_n}\left(1-P(\cE_i)\right),
\end{align*}
 where $P(\cE_i)$ denotes the probability of error in decoding  user $i$'s message. 
 In addition, if each user follows the same coding scheme, then the probability of correct decoding is given by 
\begin{align}
P_c^{(n)} & = \left(1-P(\cE_1)\right)^{k_n}. \label{Eq_ortho_prob_err}
\end{align}
We have the following theorem. 
\begin{theorem}
If $k_n = o(n/\log n)$, then any rate per unit-energy satisfying $\CR < \frac{\log e}{N_0}$ is achievable.
Hence, $\CC = \frac{\log e}{N_0}$.
\end{theorem}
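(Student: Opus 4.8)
The plan is to prove achievability by combining a time-division (orthogonal) access scheme with a good single-user code, and to reduce the many-user error analysis to a single-user one via \eqref{Eq_ortho_prob_err}. Concretely, I would partition the $n$ channel uses into $k_n$ disjoint slots of length $N := \lfloor n/k_n \rfloor$ and let user $i$ transmit only in slot $i$, remaining silent elsewhere; this is an orthogonal access scheme, and since $k_n = o(n/\log n)$ implies $k_n = o(n)$, we have $N \to \infty$. Within its slot each user employs the same capacity-achieving Gaussian codebook, so that the probability of correct decoding is $P_c^{(n)} = (1 - P(\cE_1))^{k_n}$. It then suffices to show that, for every $\CR < \frac{\log e}{N_0}$, one can design the per-slot code so that its rate per unit-energy equals $\CR$ while $P(\cE_1) = o(1/k_n)$; the latter gives $P_e^{(n)} = 1 - (1-P(\cE_1))^{k_n} \leq k_n P(\cE_1) \to 0$. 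Together with the converse bound $\CC \leq \frac{\log e}{N_0}$, which follows from \eqref{Eq_R_avg} by letting $n \to \infty$ (using $E_n \to \infty$ from Lemma~\ref{Lem_energy_infty} to kill the first term and $\log(1+t) \leq t\log e$, with $t = 2k_nE_n/(nN_0)$, to bound the second by $\frac{\log e}{N_0}$), this yields $\CC = \frac{\log e}{N_0}$.

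For the per-slot code, I would fix a power $P > 0$ small enough that $C(P)/P > \CR$, which is possible because $C(P)/P = \frac{\log(1+2P/N_0)}{2P} \to \frac{\log e}{N_0}$ as $P \to 0$ while $\CR < \frac{\log e}{N_0}$. Choosing the per-symbol rate $R := \CR\, P$ then guarantees $R < C(P)$, so that $M_n$ messages with $\log M_n = NR$ are sent over the length-$N$ slot with per-user energy $E_n = NP$ and rate per unit-energy $\frac{\log M_n}{E_n} = R/P = \CR$. Since $R < C(P)$, the random-coding reliability function of the Gaussian channel provides a constant $E^\ast := E_r(R,P) > 0$, independent of $n$, together with a codebook satisfying $P(\cE_1) \leq e^{-N E^\ast}$; a standard expurgation/power-shell argument enforces the hard energy constraint \eqref{Eq_energy_consrnt} with asymptotically negligible effect on $R$, and hence on the rate per unit-energy.

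The crux of the argument — and the point at which the hypothesis $k_n = o(n/\log n)$ enters — is the requirement $k_n P(\cE_1) \to 0$. Using $P(\cE_1) \leq e^{-N E^\ast}$ with $N \sim n/k_n$, we obtain $k_n P(\cE_1) \leq \exp\!\big(\log k_n - (n/k_n)E^\ast\big)$, which tends to zero precisely when $(n/k_n)E^\ast$ eventually dominates $\log k_n$, i.e.\ when $k_n \log k_n = o(n)$. Because $\log k_n \leq \log n$, the assumption $k_n = o(n/\log n)$ gives $k_n\log k_n \leq k_n \log n = o(n)$, so this condition holds and the per-user blocklength $N$ is large enough to drive the per-user error probability below $1/k_n$. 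I expect this exponent-versus-$\log k_n$ comparison to be the main obstacle: the single-user code must be reliable not merely in the usual sense but fast enough that the union penalty $k_n$ over all users is absorbed, and it is exactly the growth rate $n/\log n$ that marks the boundary of feasibility. Note in passing that a Verd\'u-style orthogonal codebook within each slot would require $M_n \leq N$ dimensions and is therefore too dimension-hungry here, so general (non-orthogonal) codebooks are used, foreshadowing the suboptimality of orthogonal codebooks established later.
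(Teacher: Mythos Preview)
Your proposal is correct and follows the same high-level strategy as the paper: orthogonal (TDMA) access combined with a single-user Gallager random-coding bound, then controlling the $k_n$ per-user error events via \eqref{Eq_ortho_prob_err}. The one substantive difference is the parameterization of the per-slot code. The paper lets the per-symbol power vanish with $n$ by taking $E_n = c_n \ln n$ with $c_n = \ln\bigl(n/(k_n\ln n)\bigr)$, so that $P_n = E_n k_n/n \to 0$ and the normalized Gallager function $E_0(\rho,P_n)/P_n$ tends to its zero-power limit $\rho/((1+\rho)N_0)$; the proof then tracks this convergence inside the exponent. You instead fix a small \emph{constant} power $P$ with $C(P)/P > \CR$ and work at a fixed rate $R=\CR P$ strictly below $C(P)$, obtaining a constant per-channel-use exponent $E^\ast>0$; the requirement $k_n P(\cE_1)\to 0$ then reduces cleanly to $k_n\log k_n = o(n)$, which follows immediately from $k_n = o(n/\log n)$ via $\log k_n \le \log n$. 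Your route is a bit more elementary and makes the role of the hypothesis very transparent, at the cost of a two-step choice ($P$, then $R$) in place of the paper's single energy sequence. Your added converse paragraph, bounding the second term of \eqref{Eq_R_avg} by $\frac{\log e}{N_0}$ via $\log(1+t)\le t\log e$, is also correct and fills in a step the paper leaves implicit.
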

\begin{proof}
    For a Gaussian point-to-point channel with power constraint $P$, 
    there exists an encoding and decoding scheme whose
    average probability of error is upper-bounded by
        \begin{align}
    P(\cE) & \leq  M_n^{ \rho}  \exp[-nE_0(\rho, P)],  \; \mbox{ for every } 0< \rho \leq 1, \label{Eq_upp_prob_AWGN} 
    \end{align}
    where 
    \begin{align}
    E_0(\rho, P) & := \frac{\rho}{2} \ln \left(1+\frac{2P}{(1+\rho)N_0}\right). \notag
    \end{align} 
    This bound is due to Gallager and can be found in~\cite[Section~7.4]{Gallager68}.

	Now let us consider an orthogonal access scheme in which each user gets $n/k_n$ channel uses and we timeshare between users. Each user follows
	 the coding scheme which achieves~\eqref{Eq_upp_prob_AWGN} with  power constraint $P_n = \frac{E_n}{n/k_n}$. Note that this satisfies the energy constraint~\eqref{Eq_energy_consrnt}. 
	Then by substituting $n$ with $n/k_n$ and  $P$ with $P_n = \frac{E_n}{n/k_n}$ in \eqref{Eq_upp_prob_AWGN}, we get the following bound on $P(\cE_{1})$ as a function of the rate per unit-energy $\CR = \frac{\log M_n}{E_n}$:
	\begin{align}
	 P(\cE_{1}) & \leq  M_n^{ \rho}  \exp\left[-\frac{ n}{k_n}E_0(\rho, P_n)\right] \nonumber \\
	& = \exp\left[   \rho  \ln M_n -   \frac{ n}{k_n} \frac{\rho}{2} \ln \left(1+\frac{ 2E_nk_n/n}{(1+\rho)N_0}\right) \right] \nonumber \\
	& = \exp\left[ -E_n \rho   \left(  \frac{\ln (1+\frac{ 2E_nk_n/n}{(1+\rho)N_0})}{2E_nk_n/n } -\frac{\CR}{ \log e} \right)\right]. \label{Eq_err_uppr}
	\end{align}
Combining \eqref{Eq_err_uppr} with \eqref{Eq_ortho_prob_err}, we obtain that the probability of correct decoding can be lower-bounded as 
	\begin{align}
&  1 -P_e^{(n)}   \geq \Biggl(1 -   \exp\Biggl[ -E_n \rho   \Biggl(  \frac{\ln (1+\frac{ 2E_nk_n/n}{(1+\rho)N_0})}{2E_nk_n/n } -\frac{\CR}{ \log e} \Biggr)\Biggr]\Biggr)^{k_n}. \label{Eq_ortho_prob_lower}
\end{align}
We next choose $E_n = c_n \ln n$ with $c_n := \ln\bigl(\frac{n}{k_n\ln n}\bigr)$. Since, by assumption, $k_n = o(n / \log n)$, this implies that $\frac{k_nE_n}{n} \to 0$ as $n \to \infty$. Consequently, the first term in the inner most bracket in \eqref{Eq_ortho_prob_lower} tends to $1/((1+\rho)N_0)$ as $n \to \infty$. It follows that for $\CR < \frac{\log e}{N_0}$, there exists a sufficiently large $n_0$, a $\rho > 0$, and a $\delta>0$ such that, for $n\geq n_0$, the RHS of \eqref{Eq_ortho_prob_lower} is lower-bounded by $\left(1-\exp[-E_n \rho \delta]\right)^{k_n}$. Since $c_n\delta \rho \to \infty$ as $n\to\infty$, we have
	\begin{align}
	\left(1-\exp[-E_n \rho \delta]\right)^{k_n}& \geq \left(1-\frac{1}{n^{2}}\right)^{k_n} \notag \\
	& \geq	\left(1-\frac{1}{n^{2}}\right)^{\frac{n}{\log n}} \notag \\
	& =  \left[\left(1 - \frac{1}{n^{2}}\right)^{n^{2}}\right]^{\frac{1}{n\log n}}, \label{Eq_prob_corrct}
	\end{align}
	for sufficiently large $n\geq n_0$ such that \mbox{$c_n\delta \rho\geq2$}  and  \mbox{$k_n \leq \frac{n}{\log n}$}. Noting that $(1 - \frac{1}{n^{2}})^{n^{2}} \to 1/e$ and $\frac{1}{n\log n} \to 0$ as $n\to\infty$, we obtain that the
	RHS of \eqref{Eq_prob_corrct} goes to one as $n \to \infty$. This implies that, if $k_n = o(n/\log n)$, then any rate per unit-energy  $\CR < \frac{\log e}{ N_0} $ is achievable.
\end{proof}

 \section{Performance of Orthogonal Codebooks}
\label{sec_ortho}

As mentioned in the introduction, when the number of users is bounded, the capacity per unit-energy $\CC = \frac{\log e}{N_0}$ can be achieved with orthogonal codebooks. In the following theorem, we characterize the largest rate per unit-energy achievable with orthogonal codebooks, denoted by $\CCP$, when the number of users grows with the blocklength.


\begin{theorem} 
	\label{Thm_ortho_code}
	Suppose the users apply an orthogonal access scheme and each user uses orthogonal codebooks. Then:
	\begin{enumerate}[1)]
		\item 	 If $k_n = o(n^{c})$ for every $c>0$, then  $\CCP = \frac{\log e}{N_0}$. \label{Thm_ortho_part1}
		\item If $k_n=\Theta\left({n^c}\right)$, then 
		\begin{equation*}
		\CCP = \begin{cases} \frac{\log e}{N_0}  \frac{1}{\left(1+\sqrt{\frac{c}{1-c}}\right)^2}, \quad & \textnormal{if $0<c\leq 1/2$} \\ \frac{\log e}{2 N_0} (1-c), \quad & \textnormal{if $1/2<c<1$}.\end{cases}
		\end{equation*}
		\label{Thm_ortho_part2}
	\end{enumerate}
\end{theorem}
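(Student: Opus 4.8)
The plan is to reduce the whole problem to the error behaviour of a single user's orthogonal codebook and then to a one–parameter optimization. First note that combining an orthogonal access scheme with orthogonal codebooks forces all $k_n M_n$ codewords (over all users and messages) to be pairwise orthogonal: codewords of the same user are orthogonal because the codebook is orthogonal, and codewords of different users are orthogonal because the access is orthogonal. As nonzero mutually orthogonal vectors in $\mathbb{R}^n$ they are linearly independent, which yields the dimension constraint $M_n \le n/k_n$. By \eqref{Eq_ortho_prob_err}, reliable decoding ($P_c^{(n)}\to 1$) is equivalent to $k_n P(\cE_1)\to 0$, so everything hinges on $P(\cE_1)$, the ML error probability of one user transmitting one of $M_n$ equal-energy orthogonal signals of energy $E_n$ over the AWGN channel.

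For $M_n$ equal-energy mutually orthogonal signals the error probability is, by symmetry, a fixed function of $M_n$ and $E_n/N_0$ alone, and its exponential order is the orthogonal-signaling reliability function (Gallager~\cite{Gallager68}). Writing $r_n := \frac{N_0 \ln M_n}{E_n}$, one has $P(\cE_1) = \exp\!\bigl[-\tfrac{E_n}{N_0}\bigl(\Gamma(r_n)+o(1)\bigr)\bigr]$ with
\[
\Gamma(r) = \begin{cases} \tfrac12 - r, & 0 \le r \le \tfrac14, \\ (1 - \sqrt r)^2, & \tfrac14 \le r \le 1, \end{cases}
\]
where the first branch is just the union bound $P(\cE_1)\le M_n\exp[-E_n/(2N_0)]$ written in terms of $r_n$, and the second branch requires the sharper analysis. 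Note $\Gamma$ is continuous at $r=1/4$ and vanishes at $r=1$.

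The key observation is the identity $\CR = \frac{\log M_n}{E_n} = \frac{\log e}{N_0}\,r_n$, so maximizing the rate per unit-energy is the same as maximizing $r_n$, and $\CCP = \frac{\log e}{N_0}\,r^{*}$ with $r^{*}$ the largest admissible $r_n$. Taking logarithms, the reliability condition $k_n P(\cE_1)\to 0$ becomes $\frac{\ln M_n}{r_n}\Gamma(r_n)\ge \ln k_n\,(1-o(1))$; inserting the dimension bound $\ln M_n \le (1-c)\ln n\,(1+o(1))$ together with $\ln k_n = c\ln n\,(1+o(1))$ for $k_n=\Theta(n^c)$ collapses this to $\frac{\Gamma(r_n)}{r_n}\ge \frac{c}{1-c}$. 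Since $\Gamma(r)/r$ is strictly decreasing, $r^{*}$ solves $\Gamma(r^{*})/r^{*}=c/(1-c)$. Solving on the branch $r\ge 1/4$ gives $(1-\sqrt{r})/\sqrt{r}=\sqrt{c/(1-c)}$, i.e.\ $r^{*}=\bigl(1+\sqrt{c/(1-c)}\bigr)^{-2}$, valid precisely when $r^{*}\ge 1/4$, that is $c\le 1/2$; solving on the branch $r\le 1/4$ gives $r^{*}=(1-c)/2$, valid when $c\ge 1/2$. These two cases reproduce the claimed formula. Part~\ref{Thm_ortho_part1} then follows by $c\downarrow 0$: for every $\epsilon>0$ one has $k_n=o(n^{\epsilon})$, the same computation admits $r^{*}\ge\bigl(1+\sqrt{\epsilon/(1-\epsilon)}\bigr)^{-2}\to 1$, while $\CCP\le\CC=\frac{\log e}{N_0}$ by the single-user bound, forcing $\CCP=\frac{\log e}{N_0}$.

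Both inequalities are obtained from the two-sided estimate on $P(\cE_1)$: achievability (constructing $M_n=\lfloor n/k_n\rfloor$ orthogonal pulses with $E_n=\Theta(\log n)$ chosen so that $r_n\to r^{*}$) uses the upper bound on $P(\cE_1)$, while the converse uses the matching lower bound together with $M_n\le n/k_n$. I expect the main obstacle to be the second branch of $\Gamma$: the union bound is loose for $r>1/4$, so the converse needs a genuine lower bound on the orthogonal-signaling error probability (a sphere-packing/Viterbi-type estimate), and achievability needs the correspondingly sharper upper bound. The remaining delicate point is to control the $o(1)$ terms carefully enough that the threshold $\Gamma(r_n)/r_n=c/(1-c)$ is crossed sharply, so that rates above $\frac{\log e}{N_0}r^{*}$ are genuinely infeasible and rates below it are achievable.
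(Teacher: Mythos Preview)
Your plan is correct and is essentially the paper's own argument in different notation: your normalized rate $r_n=\frac{N_0\ln M_n}{E_n}$ and exponent $\Gamma(r)$ satisfy $\Gamma(r)/r=a(\CR)$ with $\CR=\frac{\log e}{N_0}r$, so your threshold condition $\Gamma(r)/r>c/(1-c)$ is exactly the paper's $a>c/(1-c)$, and both arrive at $\CCP$ by inverting this. The paper packages the two-sided estimate on $P(\cE_1)$ that you flag as the main obstacle into Lemma~\ref{Lem_ortho_code}, proving the low-rate lower bound via a minimum-distance argument, the high-rate lower bound via a Shannon--Gallager--Berlekamp sphere-packing computation, and combining the two branches through a list-decoding splitting (with a Hadamard rotation to equalize per-coordinate energies); it also shows explicitly that equal-energy codewords are optimal, a step you invoke without justification.
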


\begin{proof}
To prove Theorem~\ref{Thm_ortho_code}, we shall first present in the following lemma bounds on the probability of error achievable over a Gaussian point-to-point channel with an orthogonal codebook.
\if \ISIT 1
	The bounds are obtained using similar methods given in~\cite[Section~2.5]{ViterbiO79} and the proof can be found in the extended version of this paper~\cite{RaviK19}.
	\else 

	\fi
	\begin{lemma}
		\label{Lem_ortho_code}
		For an orthogonal codebook with $M$ codewords  of energy less than or equal to $E$, the probability of error satisfies the following bounds:
		\begin{enumerate}
			\item For  $0 < \CR \leq \frac{1}{4} \frac{\log e}{N_0}$,
			\begin{align}
			&  \exp\left[- \frac{\ln M}{\CR}\left(\frac{\log e}{2 N_0} - \CR + o(1) \right) \right] \leq P_e  \leq  \exp\left[- \frac{\ln M}{\CR}\left(\frac{\log e}{2 N_0} - \CR \right) \right]. \label{Eq_orth_sinlg_uppr1}
			\end{align}
			\item For $\frac{1}{4} \frac{\log e}{N_0}  \leq \CR \leq \frac{\log e}{N_0}$,
		    \begin{align}
			&  \exp\left[- \frac{\ln M}{\CR}\left(\left(\sqrt{\frac{\log e}{N_0}}- \sqrt{\CR }\right)^2 + o(1) \right) \right] \leq P_e  \leq  \exp\left\{- \frac{\ln M}{\CR}  \left(\sqrt{\frac{\log e}{N_0}} - \sqrt{ \CR}\right)^2  \right\}. \label{Eq_orth_sinlg_uppr2}
			\end{align}
		\end{enumerate}
	In~\eqref{Eq_orth_sinlg_uppr1} and~\eqref{Eq_orth_sinlg_uppr2}, $o(1) \to 0$ as $E \to \infty$. 
	\end{lemma}
\if \ISIT 1

\else
\begin{proof}
See Appendix~\ref{Sec_AWGN_ortho_code}.
	\end{proof}

\fi

Next, we define
\begin{align}
a := \left\{
\begin{array}{cl}
\frac{\left(\frac{\log e}{2 N_0} - \CR  \right) }{\CR}, & \quad \mbox{if  } 0 < \CR \leq \frac{1}{4} \frac{\log e}{N_0} \\
\frac{ \left(\sqrt{\frac{\log e}{N_0}} - \sqrt{ \CR}\right)^2  }{\CR},  & \quad \mbox{if } \frac{1}{4} \frac{\log e}{N_0}  \leq \CR \leq \frac{\log e}{N_0}
\end{array}\right. \label{Eq_def_a}
\end{align}
and let $a_E:= a + o(1)$.
Then the bounds in Lemma~\ref{Lem_ortho_code} can be written as 
\begin{align}
1/M^{a_E} & \leq P_e \leq 1/M^{a}. \label{Eq_prob_err_singl}
\end{align}

	Now let us consider the case where the users apply an orthogonal access scheme and each user uses an orthogonal codebook. 
	For an orthogonal access scheme with orthogonal codebooks, the collection of codewords from all users is orthogonal, hence there are at most $n$ codewords of length $n$. Since with a symmetric code each user transmits the same number of messages, it follows that each user transmits $M_n=n/k_n$  messages with 
	codewords of energy less than or equal to $E_n$.
	 In this case, we obtain from \eqref{Eq_ortho_prob_err} and \eqref{Eq_prob_err_singl} that 
	\begin{align*}
	\left(1-\left(\frac{k_n}{n}\right)^{a}\right)^{k_n} \leq \left(1- P(\cE_1)\right)^{k_n} \leq \left(1-\left(\frac{k_n}{n}\right)^{a_{E_n}}\right)^{k_n},
	\end{align*}
	which, denoting $a_n := a_{E_n}$, can be written as 
	\begin{align}
	& \left[\left(1-\left(\frac{k_n}{n}\right)^a\right)^{(\frac{n}{k_n})^a}\right]^{\frac{k_n^{1+a}}{n^a}}\leq \left(1- P(\cE_1)\right)^{k_n}
	  \leq \left[\left(1-\left(\frac{k_n}{n}\right)^{a_n}\right)^{(\frac{n}{k_n})^{a_n}}\right]^ {\frac{k_n^{1+a_n}}{n^{a_n}}}. \label{Eq_ortho_code_upp_low}
	\end{align}
	Since  Theorem~\ref{Thm_ortho_code} 
	only concerns a sublinear number of users, we have
	\begin{align*}
	\lim\limits_{n\to \infty} \left(1-\left(\frac{k_n}{n}\right)^a\right)^{(\frac{n}{k_n})^a} 
	& =  \frac{1}{e}.
	\end{align*}
	Furthermore, if $P_e^{(n)} \to 0$ then by Lemma~\ref{Lem_energy_infty} $E_n \to \infty$ as $n \to \infty$, in which  case $a_n$ converges to the finite value $a$ as $n \to \infty$, and we obtain 
	\begin{align*}
	 \lim\limits_{n \to \infty}  \left(1-\left(\frac{k_n}{n}\right)^{a_n}\right)^{(\frac{n}{k_n})^{a_n}} & =  \frac{1}{e}.
	\end{align*}
	 So \eqref{Eq_ortho_code_upp_low} implies that 
	 $P_e^{(n)} \to 0$ as $n \to \infty$ if
	 \begin{align}
	\lim\limits_{n \to \infty}	{\frac{k_n^{1+a}}{n^a}} = 0,\label{Eq_ordr_lowr}
	 \end{align}
	 and only if 
	 \begin{align}
	\lim\limits_{n \to \infty}	 {\frac{k_n^{1+a_n}}{n^{a_n}}} = 0.  \label{Eq_ordr_uppr}
	 \end{align}
We next use these observation to prove Parts~\ref{Thm_ortho_part1}) and \ref{Thm_ortho_part2}) of Theorem~\ref{Thm_ortho_code}. We begin with Part~\ref{Thm_ortho_part1}). Let $\CR < \frac{\log e}{N_0}$. Thus, we have $a>0$ which implies that we can find a constant $\eta < a/(1+a)$ such that $n^{\eta (1+a)}/n^a \to 0$ as $n \to \infty$. Since by assumption $k_n =o(n^c)$ for every $c> 0$, it follows that there exists an $n_0$ such that, for all $n\geq n_0$, we have $k_n \leq n ^{\eta(1+a)}$. This implies that  \eqref{Eq_ordr_lowr} is satisfied, from which Part~\ref{Thm_ortho_part1}) follows.
	


We next prove Part~\ref{Thm_ortho_part2}) of Theorem~\ref{Thm_ortho_code}.
Indeed,	if $k_n=\Theta\left({n^c}\right)$,  $0<c<1$, then there exist $0<l_1<l_2$ and $n_0$ such that, for all $n\geq n_0$, we have $(l_1n)^c\leq k_n \leq (l_2n)^c$. Consequently,
	 \begin{align}
	   {\frac{(l_1 n)^{c(1+a_n)}}{n^{a_n}}} \leq    {\frac{k_n^{1+a_n}}{n^{a_n}}}
	  \leq {\frac{(l_2 n)^{c(1+a_n)}}{n^{a_n}}}. \label{Eq_ortho_err_uppr}
	 \end{align}
	 If $P_e^{(n)}  \to 0$, then from~\eqref{Eq_ordr_uppr} we have ${\frac{k_n^{1+a_n}}{n^{a_n}}}\to 0$. Thus, \eqref{Eq_ortho_err_uppr} implies that  $c(1+a_n) - a_n$ converges to a negative value.
	Since $c(1+a_n) - a_n$ tends to $c(1+a) - a$ as $n \to \infty$, it follows that  $P_e^{(n)}  \to 0$ only if  $c(1+a) - a < 0$, which is the same as $a > c/(1-c)$. 
	Using similar arguments, it follows from \eqref{Eq_ordr_lowr}  that if  $a > c/(1-c)$, then $P_e^{(n)}  \to 0$. Hence, $P_e^{(n)}  \to 0$ if and only if $a > c/(1-c)$.
	It can be observed from~\eqref{Eq_def_a} that $a$  is a monotonically decreasing function of $\CR$. So for $k_n=\Theta\left({n^c}\right), 0<c<1$, the capacity per unit-energy $\CCP$  is given by
\begin{align}
\CCP = \sup \{\CR\geq 0 : a(\CR) > c/(1-c)\}, \notag
\end{align}
where we write $a(\CR)$ to make it clear that $a$ as defined in \eqref{Eq_def_a} is a function of $\CR$.
 This supremum can be computed as 
\begin{equation*}
\CCP = \begin{cases} \frac{\log e}{N_0}  \left(\frac{1}{1+\sqrt{\frac{c}{1-c}}}\right)^2, & \quad \mbox{if  } 0 < c \leq  1/2\\
\frac{\log e}{2 N_0} (1-c), & \quad \mbox{if } 1/2<c < 1,
\end{cases}
\end{equation*}
which proves Part~\ref{Thm_ortho_part2}) of Theorem~\ref{Thm_ortho_code}.
\end{proof}


\if \ISIT 1

\appendix[Proof of Lemma~\ref{Lem_convrs_err_prob}]
\label{sec_lem_proof}
Let $\cW$ denote the set of  $M_n^{k_n}$ messages  of all users at  blocklength $n$.
To prove the lemma, we first show that 	
\begin{align*}
\frac{1}{M_n^{k_n}} \sum_{\bw \in \cW} P_e(\bw) \geq  1  -   \frac{ 8 E_n+\log 2}{\log \left( k_n(M_n -1)\right)},
\end{align*}
where $P_e(\bw)$ denotes the probability of error in decoding the set of messages $\bw=(w_1,\ldots,w_{k_n})$. To this end, we show that there exists a partition $\cS_d$, $d=1, \ldots,D$ of $\cW$ such that for every $d$ we have 
\begin{align}
	\frac{1}{|\cS_d|} \sum_{\bw \in \cS_d} P_e(\bw) \geq  1  -   \frac{ 8 E_n+\log 2}{\log \left( k_n(M_n -1)\right)}, \label{Eq_avg_prob_lowr}
\end{align}
where $|\cdot|$ denotes the cardinality of the respective set.
This implies that the RHS of \eqref{Eq_avg_prob_lowr} is also a lower bound on
\begin{align*}
\frac{1}{M_n^{k_n}} \sum_{\bw \in \cW} P_e(\bw) = \frac{1}{M_n^{k_n}} \sum_{d =1}^{D} \sum_{\bw \in \cS_d} P_e(\bw).  
\end{align*}

To describe the partition, we use the following representation for $\bw\in \cW$: 
Each $\bw \in \cW$ is denoted using a $k_n$-length vector such that the $i^{\mathrm{th}}$ position of the vector is set to $j$ if user $i$ has message $j$, where $1\leq j \leq M_n$.
The Hamming distance $d_H$ between two messages $\bw=(w_1,\ldots,w_{k_n})$ and $\bw'=(w'_1,\ldots,w'_{k_n})$ is defined as the number of positions at which $\bw$ differs from $\bw'$, i.e.,
$d_H(\bw,\bw') := \left|\{i: w_i\neq w'_i \}\right|$.

We next show that one can find a partition $\cS_d$, $d=1, \ldots,D$  such that for each set $\cS_d$, there exists a vector from which all vectors in $\cS_d$ are at Hamming distance at most two.
Let $C$ be a code in $\cW$ with minimum Hamming distance $3$, such that for any $\bw\in \cW$ there exists at least one codeword in $C$ which is at most at a distance 2 from it. 
Such a code exists because if for some $\bw\in \cW$ all codewords were at a distance 3 or more, then we could add $\bw$ to code $C$ without affecting its minimum distance. Let $\bc(1),\ldots, \bc(|C|)$ denote the codewords of code $C$. Next we partition the set $\cW$ into $D=|C|$ sets, $\cS_d$, $d=1,\ldots, D$, as follows:

For a given $d=1, \ldots,D$, we assign $\bc(d)$ to $\cS_d$ as well as all $\bw \in \cW$ that satisfy $d_H(\bw, \bc(d))=1$. These assignments are unique since the code $C$ has minimum Hamming distance 3.
We next consider  all $\bw\in \cW$ for which there is no codeword $\bc(1), \ldots, \bc(|C|)$ satisfying $d_H(\bw, \bc(d))= 1$ and assign it  to the set with index $d = \min \{i=1,\ldots, D: d_H(\bw, \bc(i)) =2 \}$. Like this, we obtain a partition of $\cW$, and since 
for every codeword 
there are $k_n(M_n -1)$ sequences at Hamming distance one from it, this partition satisfies $|\cS_d| \geq 1+k_n(M_n -1), d=1,\ldots, D$.




 We next derive the lower bound \eqref{Eq_avg_prob_lowr}.	To this end, we use  a stronger form of Fano's inequality known as Birg\'e's inequality~\cite{Yatracos88}.
\begin{lemma}[Birg\'e's inequality]
	\label{Lem_Berge}
	Let $(\cY, \cB)$ be a measurable space with a $\sigma$-field and $P_1,\ldots, P_N$  be probability measures defined on $\cB$. Further let $\cA_i$, $i=1, \ldots,N$  denote $N$ events defined on $\cY$, where $N\geq 2$.
	Then 
	\begin{align*}
	\frac{1}{N} \sum_{i=1}^{N} P_i(\cA_i) \leq \frac{\frac{1}{N^2} \sum_{i,j}^{}  D(P_i\|P_j)+\log 2}{\log (N-1)}.
	\end{align*}
\end{lemma}

To apply Lemma~\ref{Lem_Berge} to the problem at hand, we set $N=|\cS_d|$ and $P_i = P_{Y|\bX}(\cdot|\bx(i))$, where $\bx(i)$ denotes the set of codewords transmitted to convey the set of messages $i \in \cS_d$.
We further let $\cA_i$ denote the subset of $\cY^n$ for which the decoder declares the set of messages $i\in\cS_d$. Then, the probability of
error in decoding messages $i\in\cS_d$ is given by $P_e(i) =1-P_i(\cA_i)$ and $\frac{1}{|\cS_d|} \sum_{i\in \cS_d} P_i(\cA_i)$ denotes the average probability of correctly decoding a message in $\cS_d$.

For two multivariate Gaussian distributions \mbox{${\bf Z}_1 \sim \cN(\boldsymbol {\mu_1 }, \frac{N_0}{2}\cI)$}
 and ${\bf Z}_1 \sim \cN(\boldsymbol {\mu_2}, \frac{N_0}{2}\cI)$,
 the relative entropy $D({\bf Z}_1\| { \bf Z}_2)$ is given  by $ \frac{ ||\boldsymbol {\mu_1 - \mu_2}||^2}{N_0}$. We next note that $P_{\bw} =  \cN(\overline{\bx}(\bw), \frac{N_0}{2}\cI)$ and $P_{\bw'} = \cN(\overline{\bx}(\bw'), \frac{N_0}{2}\cI)$, where $\overline{\bx}(i)$ denotes the sum of codewords contained in $\bx(i)$.
 Since the energy of a codeword for any user is upper bounded by $E_n$, and since any two  $\bw, \bw' \in \cS_d$ are at a Hamming distance of at most 4, we get that $||\overline{\bx}(\bw) -\overline{\bx}(\bw')||^2 \leq 64 E_n$.
  Consequently,
\begin{align*}
D(P_{\bw}\|P_{\bw'}) \leq 64 E_n/N_0, \quad \bw, \bw'\in\cS_d. 
\end{align*}
It thus follows from Birg\'e's inequality that
\begin{align}
\frac{1}{|\cS_d|} \sum_{\bw\in \cS_d} P_e(\bw) & \geq 1  -   \frac{ 64 E_n/N_0+\log 2}{\log (|\cS_d|-1)} \notag \\
& \geq 1  -   \frac{ 64 E_n/N_0+\log 2}{\log \left( k_n(M_n -1)\right)}, \label{Eq_Nc_lowr_bnd}
\end{align}
where \eqref{Eq_Nc_lowr_bnd} follows because $ |\cS_d|-1 \geq k_n(M_n -1)$. 
Note that \eqref{Eq_Nc_lowr_bnd}  holds for all $d=1,\ldots, D$, so
\begin{align*}
P_{e}^{(n)} \geq  1  -   \frac{ 64 E_n/N_0+\log 2}{\log \left( k_n(M_n -1)\right)}.
\end{align*}
This shows that $P_{e}^{(n)}$ goes to zero only if 
\begin{align}
E_n & = \Omega\left(\log (k_n(M_n -1))\right)\notag \\
& = \Omega\left(\log M_n + \log k_n\right), \label{Eq_enrgy_ordr}
\end{align}
where the second step follows because, by Lemma~\ref{Lem_energy_infty}, $P_{e}^{(n)} \to 0$ as $n \to \infty$ only if $E_n \to \infty$,
which by the assumption $\CR > 0$ implies that $M_n \to \infty$. Using that $\log M_n = E_n\CR$, \eqref{Eq_enrgy_ordr} can be written as $E_n = \Omega(E_n\CR + \log k_n)$, which is equivalent to $E_n = \Theta(E_n\CR + \log k_n)$. However, this holds only if \mbox{$\log k_n = O(E_n)$}, which is equivalent to $E_n = \Omega(\log k_n)$. This proves Lemma~\ref{Lem_convrs_err_prob}.	

\else
\appendices

\section{Proof of Lemma~\ref{Lem_energy_infty}}
\label{Sec_energy_infty}
The probability of error of the Gaussian MnAC cannot be smaller than that of the Gaussian point-to-point channel. Indeed, suppose a genie informs the receiver about all transmitted codewords except that of user $i$. Then the receiver can subtract the known codewords from the received vector, resulting in a point-to-point Gaussian channel. Since access to additional information does not increase the probability of error, the claim follows.

We next note that, for a Gaussian point-to-point channel, any $(n,M_n, E_n, \epsilon)$-code satisfies~\cite[Theorem~2]{PolyanskiyPV11}
	\begin{align}
	\frac{1}{M_n} \geq Q\left(\sqrt{\frac{2E_n}{N_0}}+Q^{-1}(1-\epsilon)\right), \label{Eq_finite_energy}
	\end{align}
	where $Q$ denotes the tail distribution function of the standard Gaussian distribution, and $Q^{-1}$ denotes its inverse function. Solving~\eqref{Eq_finite_energy}  for $\epsilon$ yields
	\begin{align}
	\epsilon  &\geq 1 - 	Q\left(Q^{-1}\left(\frac{1}{M_n}\right)  - \sqrt{\frac{2E_n}{N_0}}\right). \nonumber
	\end{align}
	It follows that the probability of error tends to zero as $n \to \infty$ only if $Q^{-1}\left(\frac{1}{M_n}\right)  - \sqrt{\frac{2E_n}{N_0}} \rightarrow -\infty$.
	Since $Q^{-1}\left(\frac{1}{M_n}\right) \geq 0$ for $M_n\geq 2 $, this in turn is only the case if
   $E_n \rightarrow \infty$. This proves Lemma~\ref{Lem_energy_infty}.

\section{Proof of Lemma~\ref{Lem_convrs_err_prob}}
\label{Sec_Enrgy_log}
Let $\cW$ denote the set of  $M_n^{k_n}$ messages  of all users at  blocklength $n$.
To prove the lemma, we first show that 	
\begin{align*}
\frac{1}{M_n^{k_n}} \sum_{\bw \in \cW} P_e(\bw) \geq  1  -   \frac{ 64 E_n/N_0+\log 2}{\log \left( k_n(M_n -1)\right)},
\end{align*}
where $P_e(\bw)$ denotes the probability of error in decoding the set of messages $\bw=(w_1,\ldots,w_{k_n})$. To this end, we show that there exists a partition $\cS_d$, $d=1, \ldots,D$ of $\cW$ such that for every $d$ we have 
\begin{align}
\frac{1}{|\cS_d|} \sum_{\bw \in \cS_d} P_e(\bw) \geq  1  -   \frac{ 64 E_n/N_0+\log 2}{\log \left( k_n(M_n -1)\right)}, \label{Eq_avg_prob_lowr}
\end{align}
where we use $|\cdot|$ to denote the cardinality of a set.
This implies that the RHS of \eqref{Eq_avg_prob_lowr} is also a lower bound on
\begin{align*}
\frac{1}{M_n^{k_n}} \sum_{\bw \in \cW} P_e(\bw) = \frac{1}{M_n^{k_n}} \sum_{d =1}^{D} \sum_{\bw \in \cS_d} P_e(\bw).  
\end{align*}

To describe the partition, we use the following representation for $\bw\in \cW$: 
Each $\bw \in \cW$ is denoted using a $k_n$-length vector such that the $i^{\mathrm{th}}$ position of the vector is set to $j$ if user $i$ has message $j$, where $1\leq j \leq M_n$.
The Hamming distance $d_H$ between two messages $\bw=(w_1,\ldots,w_{k_n})$ and $\bw'=(w'_1,\ldots,w'_{k_n})$ is defined as the number of positions at which $\bw$ differs from $\bw'$, i.e.,
$d_H(\bw,\bw') := \left|\{i: w_i\neq w'_i \}\right|$.

We next show that one can find a partition $\cS_d$, $d=1, \ldots,D$  such that for each set $\cS_d$, there exists a vector from which all vectors in $\cS_d$ are at Hamming distance at most two.
Let $\cC$ be a code in $\cW$ with minimum Hamming distance $3$, such that for any $\bw\in \cW$ there exists at least one codeword in $\cC$ which is at most at a distance 2 from it. 
Such a code exists because if for some $\bw\in \cW$ all codewords were at a distance 3 or more, then we could add $\bw$ to $\cC$ without affecting its minimum distance. 
Thus for all $\bw \notin \cC$, there exists at least one $i$ such that $d_H(\bw,\bc(i)) \leq 2$.
Let $\bc(1),\ldots, \bc(|\cC|)$ denote the codewords of code $\cC$. Next we partition the set $\cW$ into $D=|\cC|$ sets, $\cS_d$, $d=1,\ldots, D$, as follows:

For a given $d=1, \ldots,D$, we assign $\bc(d)$ to $\cS_d$ as well as all $\bw \in \cW$ that satisfy $d_H(\bw, \bc(d))=1$. These assignments are unique since the code $\cC$ has minimum Hamming distance 3.
We next consider  all $\bw\in \cW$ for which there is no codeword $\bc(1), \ldots, \bc(|\cC|)$ satisfying $d_H(\bw, \bc(d))= 1$ and assign it  to the set with index $d = \min \{i=1,\ldots, D: d_H(\bw, \bc(i)) =2 \}$. Like this, we obtain a partition of $\cW$, and since 
for every codeword 
there are $k_n(M_n -1)$ sequences at Hamming distance one from it, this partition satisfies $|\cS_d| \geq 1+k_n(M_n -1), d=1,\ldots, D$.

We next derive the lower bound \eqref{Eq_avg_prob_lowr}.	To this end, we use  a stronger form of Fano's inequality known as Birg\'e's inequality.
\begin{lemma}[Birg\'e's inequality]
	\label{Lem_Berge}
	Let $(\cY, \cB)$ be a measurable space with a $\sigma$-field and $P_1,\ldots, P_N$  be probability measures defined on $\cB$. Further let $\cA_i$, $i=1, \ldots,N$  denote $N$ events defined on $\cY$, where $N\geq 2$.
	Then 
	\begin{align*}
	\frac{1}{N} \sum_{i=1}^{N} P_i(\cA_i) \leq \frac{\frac{1}{N^2} \sum_{i,j}^{}  D(P_i\|P_j)+\log 2}{\log (N-1)}.
	\end{align*}
\end{lemma}
\begin{proof}
	See~\cite{Yatracos88} and references therein.
\end{proof}
To apply Lemma~\ref{Lem_Berge} to the problem at hand, we set $N=|\cS_d|$ and $P_i = P_{Y|\bX}(\cdot|\bx(i))$, where $\bx(i)$ denotes the set of codewords transmitted to convey the set of messages $i \in \cS_d$.
We further let $\cA_i$ denote the subset of $\cY^n$ for which the decoder declares the set of messages $i\in\cS_d$. Then, the probability of
error in decoding messages $i\in\cS_d$ is given by $P_e(i) =1-P_i(\cA_i)$, and $\frac{1}{|\cS_d|} \sum_{i\in \cS_d} P_i(\cA_i)$ denotes the average probability of correctly decoding a message in $\cS_d$.

For two multivariate Gaussian distributions \mbox{${\bf Z}_1 \sim \cN(\boldsymbol {\mu_1 }, \frac{N_0}{2}I)$}
and ${\bf Z}_1 \sim \cN(\boldsymbol {\mu_2}, \frac{N_0}{2}I)$ (where $I$ denotes the identity matrix),
the relative entropy $D({\bf Z}_1\| { \bf Z}_2)$ is given  by $ \frac{ ||\boldsymbol {\mu_1 - \mu_2}||^2}{N_0}$. We next note that $P_{\bw} =  \cN(\overline{\bx}(\bw), \frac{N_0}{2}I)$ and $P_{\bw'} = \cN(\overline{\bx}(\bw'), \frac{N_0}{2}I)$, where $\overline{\bx}(i)$ denotes the sum of codewords contained in $\bx(i)$.
 Any two vectors $\bw, \bw' \in \cS_d$ are at a Hamming distance of at most 4. Without loss of generality, let us assume that $w_i = w'_i$ for $i=5, \ldots, k_n$. Then
 \begin{align}
\left\|\sum_{i=1}^{k_n} \bx_i(w_i) -\sum_{i=1}^{k_n} \bx_i(w'_i)\right\|^2 & = \left\|\sum_{i=1}^{4} \bx_i(w_i) - \bx_i(w'_i)\right\|^2 \notag \\
& \leq \left\|\sum_{i=1}^{4} |\bx_i(w_i) - \bx_i(w'_i)|\right\|^2 \notag\\
& \leq (4 \times 2\sqrt{E_n})^2 \notag \\
& = 64 E_n, \notag
 \end{align}
 where the first inequality follows because of the triangle inequality $|a+b| \leq |a|+|b|$, 
and the second inequality follows since the energy of a codeword for any user is upper-bounded by $E_n$ and since $|a-b|\leq |a|+|b|$.
  Consequently,
\begin{align*}
D(P_{\bw}\|P_{\bw'}) \leq 64 E_n/N_0, \quad \bw, \bw'\in\cS_d. 
\end{align*}
We thus obtain from Birg\'e's inequality that
\begin{align}
\frac{1}{|\cS_d|} \sum_{\bw\in \cS_d} P_e(\bw) & \geq 1  -   \frac{ 64 E_n/N_0+\log 2}{\log (|\cS_d|-1)} \notag \\
& \geq 1  -   \frac{ 64 E_n/N_0+\log 2}{\log \left( k_n(M_n -1)\right)}, \label{Eq_Nc_lowr_bnd}
\end{align}
where \eqref{Eq_Nc_lowr_bnd} follows because $ |\cS_d|-1 \geq k_n(M_n -1)$. 
Note that \eqref{Eq_Nc_lowr_bnd}  holds for all $d=1,\ldots, D$, so
\begin{align*}
P_{e}^{(n)} \geq  1  -   \frac{ 64 E_n/N_0+\log 2}{\log \left( k_n(M_n -1)\right)}.
\end{align*}
This shows that $P_{e}^{(n)}$ goes to zero only if 
\begin{align}
E_n & = \Omega\left(\log (k_n(M_n -1))\right)\notag \\
& = \Omega\left(\log M_n + \log k_n\right), \label{Eq_enrgy_ordr}
\end{align}
where the second step follows because, by Lemma~\ref{Lem_energy_infty}, $P_{e}^{(n)} \to 0$ as $n \to \infty$ only if $E_n \to \infty$,
which by the assumption $\CR > 0$ implies that $M_n \to \infty$. Using that $\log M_n = E_n\CR$, \eqref{Eq_enrgy_ordr} can be written as $E_n = \Omega(E_n\CR + \log k_n)$, which is equivalent to $E_n = \Theta(E_n\CR + \log k_n)$. However, this holds only if \mbox{$\log k_n = O(E_n)$}, which is equivalent to $E_n = \Omega(\log k_n)$. This proves Lemma~\ref{Lem_convrs_err_prob}.	

\section{Proof of Lemma~\ref{Lem_ortho_code}}
\label{Sec_AWGN_ortho_code}
The upper bounds on the probability of error in~\eqref{Eq_orth_sinlg_uppr1} and \eqref{Eq_orth_sinlg_uppr2} are proved in Subsection~\ref{Sec_uppr}. The lower bounds are proved in Subsection~\ref{Sec_lowr}.
\subsection{Upper bounds}
\label{Sec_uppr}
An upper bound on the probability of error for $M$ orthogonal codewords of maximum energy $E$ can be found in~\cite[Section~2.5]{ViterbiO79}:
\begin{align}
P_{e} & \leq (M-1)^{\rho} \exp\left[-\frac{E}{N_0} \left (\frac{\rho}{\rho+1}\right)\right]  \notag\\
& \leq  \exp\left[-\frac{E}{N_0} \left(\frac{\rho}{\rho+1}\right)+ \rho \ln M\right], \text{ for all }  0 \leq \rho\leq 1.\label{Eq_enrgy_AWGN}
\end{align}
For the rate per unit-energy $\CR = \frac{ \log M}{E}$, it follows from~\eqref{Eq_enrgy_AWGN}
that
\begin{align}
P_{e}  & \leq  \exp\left[-\frac{E}{N_0} \left(\frac{\rho}{\rho+1}\right)+ \frac{\rho E \CR}{\log e} \right]\notag\\
& = \exp[-E E_0(\rho, \CR)], \label{Eq_achvbl_err_exp}
\end{align}
where
\begin{align}
E_0(\rho, \CR) & := \left(\frac{1}{N_0} \frac{\rho}{\rho+1} -\frac{\rho \CR}{\log e}\right). \label{Eq_err_exp}
\end{align}
When $ 0 < \CR \leq \frac{1}{4} \frac{\log e}{N_0} $, the maximum of $E_0(\rho, \CR)$ over all $0 \leq \rho \leq 1$ is achieved for $\rho=1$.
When $\frac{1}{4} \frac{\log e}{N_0} \leq \CR \leq  \frac{\log e}{N_0}$, the maximum of $ E_0(\rho, \CR) $  is achieved for $\rho = \sqrt{ \frac{\log e}{N_0} \frac{1}{\CR} } -1 \in[0,1]$.   
It follows that
 \begin{align}
\max_{0 \leq \rho \leq 1 } E_0(\rho, \CR) =
 \begin{cases}
\frac{1}{2 N_0} - \frac{\CR}{\log e}, & 0 < \CR \leq \frac{1}{4} \frac{\log e}{N_0}  \\
\left(\sqrt{\frac{1}{N_0}} - \sqrt{ \frac{\CR}{\log e}}\right)^2,   & \frac{1}{4} \frac{\log e}{N_0}  \leq \CR \leq \frac{\log e}{N_0} .
 \end{cases}
  \label{Eq_err_exp_achv}
 \end{align}
 
Since $E = \frac{\log M}{\CR}$, we obtain from \eqref{Eq_achvbl_err_exp} and \eqref{Eq_err_exp_achv} that
\begin{align*}
P_e \leq
   & \exp\left[- \frac{\ln M}{\CR}\left(\frac{\log e}{2 N_0} - \CR \right) \right], \text{ if } 0 < \CR \leq \frac{1}{4} \frac{\log e}{N_0}
\end{align*}
and
 \begin{align*}
 P_e \leq   & \exp\left\{- \frac{\ln M}{\CR}  \left(\sqrt{\frac{\log e}{N_0}} - \sqrt{ \CR}\right)^2\right\}, \text{ if } \frac{1}{4} \frac{\log e}{N_0}  \leq \CR \leq \frac{\log e}{N_0}.
 \end{align*}  
This proves the upper bounds on the probability of error in~\eqref{Eq_orth_sinlg_uppr1} and \eqref{Eq_orth_sinlg_uppr2}.

\subsection{Lower bounds}
\label{Sec_lowr}
To prove the lower bounds on the probability of error in~\eqref{Eq_orth_sinlg_uppr1} and \eqref{Eq_orth_sinlg_uppr2}, we first argue that, for an orthogonal code, the optimal probability of error is achieved by codewords of equal energy. Then, for any given $\CR$ and an orthogonal codebook where all codewords have equal energy, we derive the lower bound in \eqref{Eq_orth_sinlg_uppr2}, which is optimal at high rates. 
 We further obtain an improved lower bound on the probability of error for low rates. 
 Finally, the lower bound in  \eqref{Eq_orth_sinlg_uppr1} follows by showing that a combination of the two lower bounds yields a lower bound, too.

\subsubsection{Equal-energy codewords are optimal}

We shall argue that, for an orthogonal code with energy upper-bounded by $E_n$, there is no loss in optimality in assuming that all codewords have energy  $E_n$. To this end,
we first note that, without loss of generality, we can restrict ourselves to codewords of the form
\begin{align}
\bx_m = (0,\ldots, \sqrt{E_{\bx_m}},\ldots,0), \; m=1, \ldots, M, \label{Eq_ortho_code}
\end{align}
where  $E_{\bx_m}\leq E_n$ denotes the energy of codeword $\bx_m$. Indeed, any orthogonal codebook can be multiplied by an orthogonal matrix to obtain this form. Since the additive Gaussian noise $\bZ$ is zero mean and has a diagonal covariance matrix, this does not change the probability of error.

To argue that equal energy codewords are optimal, let us consider a code $\cC$ for which some codewords have energy strictly less than  $E_n$. From $\cC$, we can construct a new code $\cC'$ by multiplying each codeword $\bx_m$ by $\sqrt{E_n/E_{\bx_m}}$. Clearly, each codeword in $\cC'$ has energy $E_n$. Let $\bY$ and $\bY'$ denote the channel outputs when we transmit codewords from $\cC$ and $\cC'$, respectively, and let $P_e(\cC)$ and $P_e(\cC')$ denote the corresponding minimum probabilities of error. By multiplying each dimension of the channel output $\bY'$ by $\sqrt{E_{\bx_m}/E_n}$ and adding Gaussian noise of zero mean and variance $E_n/E_{\bx_m}$, we can construct a new channel output $\tilde{\bY}$ that has the same distribution as $\bY$. Consequently, $\cC'$ can achieve the same probability of error as $\cC$ by applying the decoding rule of $\cC$ to $\tilde{\bY}$. It follows that $P_e(\cC')\leq P_e(\cC)$. We conclude that, in order to find lower bounds on the probability of error, it suffices to consider codes whose codewords have energy $E_n$.

\subsubsection{High-rate lower bound}

To obtain the high-rate lower bound \eqref{Eq_orth_sinlg_uppr2}, we follow the analysis given in~\cite{ShannonGB67} (see also~\cite[Section~3.6.1]{ViterbiO79}).  Thus, we shall first derive a lower bound on the maximum probability of error
 \begin{align*}
P_{e_{\max}}  & :=  \max_{m} P_{e_m},
 \end{align*}
 where $P_{e_m}$ denotes the probability of error in decoding message $m$.
 In a second step, we derive from this bound a lower bound on the average probability of error $P_e$ by means of expurgation.
 For $P_{e_{\max}}$, it was shown that at least one of the 
following two inequalities is always satisfied~\cite[Section~3.6.1]{ViterbiO79}:
\begin{align}
1/M & \geq \frac{1}{4} \exp\left[\mu(s) - s\mu'(s) -s \sqrt{2\mu''(s)}\right],  \label{Eq_lowr_first} \\
P_{e_{\max}} & \geq \frac{1}{4} \exp\left[\mu(s) + (1-s) \mu'(s) -(1-s) \sqrt{2\mu''(s)}\right], \label{Eq_lowr_second}
\end{align}
for all $0 \leq s \leq 1$, where
\begin{align}
\mu(s) & =-\frac{E}{N_0} s(1-s), \label{Eq_const1} \\
\mu'(s) & = -\frac{E}{N_0}(1-2s),\label{Eq_const2}\\
\mu''(s) & = \frac{2E}{N_0}. \label{Eq_const3}
\end{align}
By substituting these values in \eqref{Eq_lowr_first}, we obtain
\begin{align*}
\ln M \leq \frac{E}{N_0}\left[s^2 + \frac{2s}{\sqrt{E/N_0}}+ \frac{ \ln 4}{E/N_0}\right].
\end{align*}
Using that $0 \leq s \leq 1$ and that  $E = \frac{\log M}{\CR}$, this can be further upper-bounded by
\begin{align}
\CR & \leq \frac{\log e}{N_0} \left[s^2 + \frac{2}{\sqrt{E/N_0}}+ \frac{ \ln 4}{E/N_0}\right].  \label{Eq_uppr_rate}
\end{align}
Similarly, substituting \eqref{Eq_const1}-\eqref{Eq_const3} in \eqref{Eq_lowr_second} yields 
\begin{align}
P_{e_{\max}}  & \geq \exp\left[- \frac{E}{N_0}(1-s)^2 - 2(1-s)\sqrt{\frac{E}{N_0}} - \ln 4\right] \notag\\
& \geq \exp\left[- \frac{E}{N_0}\left((1-s)^2 + \frac{2}{\sqrt {E/N_0}} +\frac{ \ln 4}{E/N_0}\right)\right]. \label{Eq_low_bnd2_err_prob}
\end{align}
For a given $E$, let $\delta_E$ be defined as  $\delta_E := 2\left(\frac{2}{\sqrt {E/N_0}} +\frac{ \ln 4}{E/N_0}\right)$
 and let $s_{E} := \sqrt{\CR \frac{N_0}{\log e}- \delta_E}$.
Then, \eqref{Eq_uppr_rate} is violated for $s=s_{E}$, which implies that \eqref{Eq_low_bnd2_err_prob} must be satisfied for $s=s_{E}$. By substituting $s=s_E$ in~\eqref{Eq_low_bnd2_err_prob}, we obtain 
\begin{align}
&P_{e_{\max}} 
 \geq \exp\left[- E\left(\left(\sqrt{ \frac{1}{N_0}}- \sqrt{ \frac{\CR}{\log e}  -  \frac{\delta_E}{N_0} }\right)^2 + \frac{\delta_E}{2N_0} \right) \right]. \label{Eq_err_exp_upp}
\end{align}

We next use \eqref{Eq_err_exp_upp} to derive a lower bound on $P_e$. Indeed, any codebook $\cC$ with $M$ messages can be divided into two codebooks $\cC_1$ and $\cC_2$ of $M/2$ messages each. If we divide the codebook such that  $\cC_1$ contains the codewords with the smallest probability of error $P_{e_m}$  and $\cC_2$ contains the codewords with the largest $P_{e_m}$, then it holds for each codeword  in $\cC_1$ that  $ P_{e_m} \leq 2 P_e$. Consequently, the largest error probability of code $\cC_1$, denoted as $P_{e_{\max}}(\cC_1)$,  and the average error probability 
of code $\cC$, denoted as $P_e(\cC)$, satisfy
\begin{align}
P_e(\cC) \geq \frac{1}{2} P_{e_{\max}}(\cC_1). \label{Eq_max_half}
\end{align}
Applying \eqref{Eq_err_exp_upp} for $\cC_1$, and using that the rate per unit-energy of $\cC_1$ satisfies $\CR' = \frac{\log M/2}{E} = \CR - \frac{1}{E}$, we obtain 
\begin{align*}
& P_{e_{\max}} 
  \geq\exp\left[- E\left(\left(\sqrt{ \frac{1}{N_0}}- \sqrt{ \frac{\CR}{\log e} - \frac{1}{E}  -   \frac{\delta_E}{N_0} }\right)^2 + \frac{\delta_E}{2N_0} \right) \right].
\end{align*}
Together with \eqref{Eq_max_half}, this yields
\begin{align}
P_e & \geq \frac{1}{2} \exp\left[- E\left(\left(\sqrt{ \frac{1}{N_0}}- \sqrt{ \frac{\CR}{\log e} - \frac{1}{E}  -   \frac{\delta_E}{N_0} }\right)^2 + \frac{\delta_E}{2N_0} \right) \right] \notag \\
& = \exp\left[- E\left(\left(\sqrt{ \frac{1}{N_0}}- \sqrt{ \frac{\CR}{\log e} - \frac{1}{E}  -   \frac{\delta_E}{N_0} }\right)^2 + \frac{\delta_E}{2N_0} - \frac{\ln 2}{E}  \right) \right].\label{Eq_prob_low_bnd}
\end{align}
Let $\delta'_E := \frac{1}{E}  +   \frac{\delta_E}{N_0} $. Then 
\begin{align*}
\sqrt{ \frac{\CR}{\log e} - \frac{1}{E}  -   \frac{\delta_E}{N_0} } &= \sqrt{ \frac{\CR}{\log e} - \delta'_E}\\
& =  \sqrt{ \frac{\CR}{\log e} } + O(\delta'_{E})\\
&=  \sqrt{ \frac{\CR}{\log e} } + O\left(\frac{1}{\sqrt{E}}\right)
\end{align*}
where the last step follows by noting that $O(\delta'_E)=O(\delta_E)=O(1/\sqrt{E})$. Further defining $ \delta''_E := \frac{\delta_E}{2N_0} - \frac{\ln 2}{E}$, we may write \eqref{Eq_prob_low_bnd} as
\begin{align}
P_e &\geq \exp\left[- E\left(\left(\sqrt{ \frac{1}{N_0}}- \sqrt{ \frac{\CR}{\log e}} + O\left(\frac{1}{\sqrt{E}}\right)\right)^2 + \delta''_E  \right) \right] \notag \\
&  = \exp\left[- E\left(\left(\sqrt{ \frac{1}{N_0}}- \sqrt{ \frac{\CR}{\log e}} \right)^2 +O\left(\frac{1}{\sqrt{E}} \right)\right) \right] \label{Eq_prob_low_bnd1}
\end{align}
since $O(\delta''_E)=O(\delta_E)=O(1/\sqrt{E})$. By substituting $E = \frac{\log M}{\CR}$, \eqref{Eq_prob_low_bnd1} yields
\begin{align}																						
P_{e}  & \geq \exp\left[- \frac{\ln M}{\CR}\left(\left(\sqrt{\frac{\log e}{N_0}}- \sqrt{\CR }\right)^2 + O\left(\frac{1}{\sqrt{E}}\right) \right) \right].\label{Eq_err_exp_lwr}
\end{align}
Since $O\left(\frac{1}{\sqrt{E}}\right) \to 0$ as $E \to \infty$, this  proves the lower bound in \eqref{Eq_orth_sinlg_uppr2}.

\subsubsection{ Low-rate lower bound}
To prove the lower bound in \eqref{Eq_orth_sinlg_uppr1}, we first derive a lower bound on $P_e$ that, for low rates, is tighter than \eqref{Eq_err_exp_lwr}. This bound is based on the fact that for $M$ codewords of energy $E$, the minimum Euclidean distance $d_{\min}$ between  codewords is upper-bounded by $\sqrt{2EM/(M-1)}$~\cite[Section~3.7.1]{ViterbiO79}. Since for the Gaussian channel the maximum error probability is lower-bounded by the largest  pairwise error probability, it follows that 
	\begin{align}
	P_{e_{\max}} & \geq Q\left( \frac{d_{\min}}{\sqrt{2 N_0}}\right) \notag \\
	& \geq Q\left(  \sqrt{ \frac{ EM}{N_0(M-1)}}\right) \notag\\
	& \geq \left(1-\frac{1}{EM/N_0(M-1)}\right) \frac{e^{-\frac{EM}{2N_0(M-1)}}}{\sqrt{2 \pi} \sqrt{ EM/N_0(M-1)}}, \label{Eq_Low_rate_low}
	\end{align}	
where the last inequality follows because~\cite[Section~2.3]{ViterbiO79}
	\begin{align*}
	Q(\beta) \geq \left(1-\frac{1}{\beta^2}\right) \frac{e^{-\beta^2/2}}{\sqrt{2 \pi} \beta}, \quad \beta>0.
	\end{align*}
	Let $\beta_E := \sqrt{EM/N_0(M-1)}$. It follows that
	\begin{align}
		\sqrt{E/N_0} \leq \beta_E \leq \sqrt{2E/N_0}, \quad M\geq 2. \label{Eq_beta_bounds}
	\end{align}
Since by the assumption $\CR = \frac{\log M}{E} >0$ we have that $M \to \infty$ as $E \to \infty$, applying \eqref{Eq_beta_bounds} to \eqref{Eq_Low_rate_low} yields for sufficiently large $E$
	\begin{align}	
	 P_{e_{\max}} & \geq \frac{1}{ \sqrt{2 \pi}} \exp\left[-E\left(\frac{1}{2N_0}\left(1+ \frac{1}{M-1}\right)\right)\right]
		\exp\left[\ln \left(\frac{1}{\beta_E} - \frac{1}{\beta_E^3}\right)\right] \ \notag  \\
	& \geq \frac{1}{ \sqrt{2 \pi}} \exp\left[-E\left(\frac{1}{2N_0}\left(1+ \frac{1}{M-1}\right)\right)\right]
	\exp\left[-E \frac{\frac{3}{2}\ln (2E/N_0)- \ln(E/N_0 -1)}{E}\right] \notag\\
	& =  \exp\left[-E\left(\frac{1}{2N_0}\left(1+ \frac{1}{M-1}\right) + O\left(\frac{\ln E }{E}\right)\right)\right].\label{Eq_lowr_prob}
	\end{align}
Following similar steps of expurgation as before, we obtain from \eqref{Eq_lowr_prob} the lower bound
	\begin{align}
	P_e &\geq \exp\left[-E\left(\frac{1}{2N_0}\left(1+ \frac{1}{\frac{M}{2}-1}\right) + O\left(\frac{\ln E }{E}\right)\right)\right]. \notag 
	\end{align}
By using that $M = 2^{\CR E}$, it follows that	
\begin{align}
P_e &\geq \exp\left[-E\left(\frac{1}{2N_0}\left(1+ \frac{1}{\frac{2^{\CR E}}{2}-1}\right) + O\left(\frac{\ln E }{E}\right)\right)\right]\label{Eq_low_any_rate} 
\end{align}
from which we obtain that, for any rate per unit-energy $\CR > 0$,
\begin{align}
P_e &\geq \exp\left[-E\left(\frac{1}{2N_0} + O\left(\frac{\ln E}{E}\right)\right)\right]. \label{Eq_low_any_rate1} 
\end{align}
\comment{
Next we give a lemma to prove our result.  To prove this, we use the techniques  from \cite{ShannonGB67}.
\begin{lemma}
	\label{Lem_comb_rate}
	For rates $\CR'$ and $\CR''$, assume
	\begin{align*}
	P_e(\CR') &\geq e^{-E\left(E'_{\exp}(\CR')+o(1)\right)},
	\end{align*}
	 and 
	\begin{align*}
	P_e(\CR'') &\geq e^{-E\left(E''_{\exp}(\CR'')+o(1)\right)}.	
	\end{align*}
	Then for orthogonal codebook with rate $\CR$
	$\CR = \lambda\CR'+(1-\lambda)\CR''$, where $0 \leq \lambda \leq 1$, the probability of error is lower bounded by
	$$P_e(\CR) \geq e^{-E[\lambda E'_{\exp}(\CR')+(1-\lambda)E''_{\exp}(\CR'')+o(1)]}.$$
\end{lemma}
}

\subsubsection{Combining the high-rate and the low-rate bounds}
We finally show that a combination of the lower bounds \eqref{Eq_prob_low_bnd1} and \eqref{Eq_low_any_rate} yields a lower bound, too. This then proves the lower bound in \eqref{Eq_orth_sinlg_uppr1}. 

Let $P_e^{\bot}(E,M)$ denote the smallest probability of error that can be achieved by an orthogonal codebook with $M$ codewords of energy $E$. 
We first note that $P_e^{\bot}(E,M)$ is monotonically increasing in $M$. Indeed, without loss of optimality, we can restrict ourselves to codewords of the form \eqref{Eq_ortho_code}, all having energy $E$. In this case, the probability of correctly decoding message $m$ is given by~\cite[Section~8.2]{Gallager68}
\begin{align}
P_{c,m}^{\bot} & = \textnormal{Pr}\biggl(\bigcap_{i\neq m} \{Y_m > Y_i\}\biggm|\bX=\bx_m\biggr) \notag \\
& =\frac{1}{\sqrt{\pi N_0}}\int_{-\infty}^{\infty} \exp\left[ \frac {(y_m - \sqrt{E})^2}{N_0}  \right] \textnormal{Pr}\biggl(\bigcap_{i\neq m}\{Y_i < y_m\}\biggm|\bX=\bx_m\biggr)\d y_m \notag\\
& =\frac{1}{\sqrt{\pi N_0}}\int_{-\infty}^{\infty} \exp\left[ \frac {(y_m - \sqrt{E})^2}{N_0}  \right] \left(1-Q(y_m)\right)^{M-1} \d y_m,
 \label{Eq_prob_ortho_corrct}
\end{align}
where $Y_i$ denotes the $i^\mathrm{th}$ component of the received vector $\bY$. In the last step of \eqref{Eq_prob_ortho_corrct}, we have used that, conditioned on $\bX=\bx_m$, the events $ \{Y_i<y_m\}$, $i\neq m$ are independent and $\textnormal{Pr}(Y_i < y_m|\bX=\bx_m)$ can be computed as $1-Q(y_m)$.
Since $P_{c,m}^{\bot}$ is the same for all $m$, we have $P_e^{\bot}(E,M) = 1 -P_{c,m}^{\bot}$. The claim then follows by observing that \eqref{Eq_prob_ortho_corrct} is monotonically decreasing in $M$.

Let $ \tilde{M}$ be the largest power of 2 less than or equal to $M$. It follows by the monotonicity of $P_e^{\bot}(E,M)$ that
\begin{align}
P_e^{\bot}(E,M) \geq P_e^{\bot}(E, \tilde{M}).\label{Eq_ortho_two_code1}
\end{align} 
We next show that for every $E_1$ and $E_2$ satisfying $E=E_1+E_2$, we have
\begin{align}
P_e^{\bot}(E, \tilde{M}) \geq P_e(E_{1},  \tilde{M}, L)P_e(E_{2}, L+1),\label{Eq_prob_prod1}
\end{align}
where $P_e(E_1,  \tilde{M}, L)$ denotes the smallest probability of error that can be achieved by a codebook with $\tilde{M}$ codewords of energy $E_1$ and a list decoder of list size $L$, and $P_e(E_{2}, L+1)$ denotes the smallest probability of error that can be achieved by a codebook with $L+1$ codewords of energy $E_2$.

To prove \eqref{Eq_prob_prod1}, we follow along the lines of \cite{ShannonGB67}, which showed the corresponding result for codebooks of a given blocklength rather than a given energy. Specifically, it was shown in \cite[Theorem~1]{ShannonGB67} that for every codebook $\cC$ with $M$ codewords of blocklength $n$, and for any $n_1$ and $n_2$ satisfying $n=n_1+n_2$, we can lower-bound the probability of error by
\begin{equation}
P_e(\cC) \geq P_e(n_1,  M,L )P_e(n_2, L+1), \label{Eq_prob_prod2}
\end{equation}
where $P_e(n_1, M, L )$ denotes the smallest probability of error that can be achieved by a codebook with $M$ codewords of blocklength $n_1$ and a list decoder of list size $L$, and $P_e(n_2, L+1)$ denotes the smallest probability of error that can be achieved by a codebook with $L+1$ codewords of blocklength $n_2$.
This result follows by writing the codewords $\bx_m$ of blocklength $n$ as concatenations of the vectors
\begin{align*}
	\bx'_m =(x_{m,1}, x_{m,2}, \ldots, x_{m,n_1})
	\end{align*}
	and 
	\begin{align*}
	\bx''_m =(x_{m,n_1+1}, x_{m,n_1+2}, \ldots,x_{m,n_1+n_2})
	\end{align*}
	and, likewise, by writing the received vector $\by$ as the concatenation of the  vectors $\by'$ and $\by''$ of length $n_1$ and $n_2$, respectively. Defining $\Delta_m$ as the decoding region for message $m$ and $\Delta''_m(\by')$ as the decoding region for message $m$ when $\by'$ was received, we can then write $P_e(\cC)$ as
	\begin{align}
	\label{eq:SGB67}
	P_e(\cC) & = \frac{1}{M} \sum_{m=1}^M \sum_{\by'} p(\by'|\bx'_m) \sum_{\by''\in\bar{\Delta}''_m}p(\by''|\bx''_m)
	\end{align}
where $\bar{\Delta}''_m$ denotes the complement of $\Delta''_m$. Lower-bounding first the inner-most sum in \eqref{eq:SGB67} and then the remaining terms, one can prove \eqref{Eq_prob_prod2}.

A codebook with $\tilde{M}$ codewords of the form \eqref{Eq_ortho_code} can be transmitted in $\tilde{M}$ time instants, since in the remaining time instants all codewords are zero. We can thus assume without loss of optimality that the codebook's blocklength is $\tilde{M}$. Unfortunately, for such codebooks, the above approach yields \eqref{Eq_prob_prod1} only in the trivial cases where either $E_1=0$ or $E_2=0$. Indeed, $E_1$ and $E_2$ correspond to the energies of the vectors $\bx'_m$ and $\bx''_m$, respectively, and for \eqref{Eq_ortho_code} we have $\bx'_m=\mathbf{0}$ if $m>n_1$ and $\bx''_m=\mathbf{0}$ if $m\leq n_1$, where $\mathbf{0}$ denotes the all-zero vector. We sidestep this problem by multiplying the codewords by a normalized Hadamard matrix. The Hadamard matrix, denoted by $H_{j}$, is a square matrix of size $j \times j$ with entries $\pm 1$ and has the property that all rows are orthogonal. Sylvester's construction shows that there exists a Hadamard matrix of order $j$ if $j$ is a power of 2. Recalling that $\tilde{M}$ is a power of $2$, we can thus find a normalized Hadamard matrix \[\tilde{H}:=\frac{1}{\sqrt{\tilde{M}}} H_{\tilde{M}}.\] Since the rows of $\tilde{H}$ are orthonormal, it follows that the matrix $\tilde{H}$ is orthogonal. Further noting that the additive Gaussian noise $\bZ$ is zero mean and has a diagonal covariance matrix, we conclude that the set of codewords $\{\tilde{H}\bx_m,\,m=1,\ldots,\tilde{M}\}$ achieve the same probability of error as the set of codewords $\{\bx_m,\,m=1,\ldots,\tilde{M}\}$. Thus, without loss of generality we can restrict ourselves to codewords of the form $\tilde{\bx}_m=\tilde{H}\bx_m$, where $\bx_m$ is as in \eqref{Eq_ortho_code}. Such codewords have constant modulus, i.e., $|\tilde{x}_{m,k}| = \sqrt{\frac{E}{\tilde{M}}}, k=1, \ldots,\tilde{M}$. This has the advantage that the energies of the vectors
\begin{align*}
	\tilde{\bx}'_m =(\tilde{x}_{m,1}, \tilde{x}_{m,2}, \ldots, \tilde{x}_{m,n_1})
	\end{align*}
	and 
	\begin{align*}
	\tilde{\bx}''_m =(\tilde{x}_{m,n_1+1}, \tilde{x}_{m,n_1+2}, \ldots,\tilde{x}_{m,n_1+n_2}).
	\end{align*}
are proportional to $n_1$ and $n_2$, respectively. Thus, by emulating the proof of \eqref{Eq_prob_prod2}, we can show that for every $n_1$ and $n_2$ satisfying $\tilde{M}=n_1+n_2$ and $E_i=E n_i/\tilde{M}$, $i=1,2$, we have
\begin{equation}
\label{Eq_prob_prod3}
P_e^{\bot}(E, \tilde{M}) \geq P_e(E_{1},n_1,\tilde{M}, L)P_e(E_{2},n_2,L+1),
\end{equation}
where $P_e(E_{1},n_1,\tilde{M}, L)$ denotes the smallest probability of error that can be achieved by a codebook with $\tilde{M}$ codewords of energy $E_1$ and blocklength $n_1$ and a list decoder of list size $L$, and $P_e(E_{2}, n_2, L+1)$ denotes the smallest probability of error that can be achieved by a codebook with $L+1$ codewords of energy $E_2$ and blocklength $n_2$. We then obtain \eqref{Eq_prob_prod1} from \eqref{Eq_prob_prod3} because
\begin{equation*}
P_e(E_{1},n_1,\tilde{M}, L) \geq P_e(E_{1},\tilde{M}, L) \quad \textnormal{and} \quad P_e(E_{2},n_2,L+1) \geq P_e(E_{2},L+1).
\end{equation*}

	\comment{
	Further,
	let $\Delta_m $ denote the decoding region for message $m$. Then the average probability of error $P_e := P_e^{\bot}(E_n, \tilde{M}_n )$ is given by
		\begin{align}
		P_e & =  \frac{1}{\tilde{M}_n } \sum_{m =1}^{\tilde{M}_n } \sum_{\by \in \overline{\Delta}_m} p(\by|\tilde{\bx}_m ),
		\end{align}
		where $ \overline{\Delta}_m $ denotes the complement of the region $ \Delta_m$ and $ p(\by|\bx_m)$ denotes the probability of receiving vector $\by$ given that $\bx_m$ was transmitted.
	Let $\Delta''_m(\by') $ denote the decoding region for message $m$ for given $\by'$,  i.e.,
	\begin{align}
	\Delta''_m(\by') & = \{\by'': \by = (\by',\by'') \in \Delta_m  \}.
	\end{align}
	Then we may write the average probability of error  as 
	\begin{align}
	P_e & = \frac{1}{\tilde{M}_n } \sum_{m =1}^{\tilde{M}_n } \sum_{\by'} p(\by'|\bx'_m) \sum_{\by'' \in \overline{\Delta''_m}(\by')} p(\by''|\bx''_m)\\
	& = \frac{1}{\tilde{M}_n } \sum_{m =1}^{\tilde{M}_n } \sum_{\by'} p(\by'|\bx'_m) P_{e_m}(\by'), \label{Eq_avg_prob}
	\end{align}
	where $P_{e_m}(\by')$ is the probability of error in decoding message $m$ given that $\by'$ has received.
	
	Next we show that the optimal $L$-message decoder error probability $P_e(E_{1}, \tilde{M}_n , L,n_1)$ is a lower bound for $\frac{1}{\tilde{M}_n } \sum_{m =1}^{\tilde{M}_n } \sum_{\by'} p(\by'|\bx'_m)$ and $ P_{e_m}(\by')$ is lower bounded by $P_e(E_{2},L+1,n_2)$. 
	To prove $ P_{e_m}(\by') \geq P_e(E_{2},L+1, n_2) $, we use the method of contradiction. Let $m_1(\by'), m_2(\by'), \ldots, m_L(\by')$ be the $L$ messages with lowest probability of error for given $\by'$. The messages are numbered such that 
	\begin{align}
	P_{e_{m_1}}(\by') \leq P_{e_{m_2}}(\by') \leq \cdots \leq P_{e_{m_L}}(\by') \leq P_{e_{m_k}}(\by')
	\end{align}
	 for every $k > L$.  Thus we get for $m_k \notin \{m_1(\by'), m_2(\by'), \ldots, m_L(\by')\}$
	 \begin{align}
	 P_{e_{m_k}}(\by') \geq P_e(E_{2}, L+1). \label{Eq_low_rate_lower}
	 \end{align}
	 Otherwise, there exist $L+1$ messages such that each of them has a probability of error strictly less than  $ P_{E_{m_k}}(\by')$ which is a contradiction. Substituting \eqref{Eq_low_rate_lower} in \eqref{Eq_avg_prob}, we get
	 \begin{align}
	 P_e & \geq \frac{1}{\tilde{M}_n }  \sum_{\by'} \sum_{m =m_k(\by'): k > L} p_{n_1}(\by'|\bx'_m)P_e(E_{2}, L+1). \label{Eq_lower_bnd}
	 \end{align}
	 In \eqref{Eq_lower_bnd}, $$\frac{1}{\tilde{M}_n }  \sum_{\by'} \limits\sum_{m =m_k(\by'): k > L} p_{}(\by'|\bx'_m)$$ denotes the probability of error for list decoder for list size $L$ which is lower-bounded by the optimal probability of error $ P_e(E_{1}, \tilde{M}_n , L, n_1)$. Thus we have \eqref{Eq_prob_prod2}.
	}
	 
	We next give a lower bound on $ P_e(E_{1}, \tilde{M}, L)$. Indeed, for list decoding of list size $L$, the inequalities \eqref{Eq_lowr_first} and \eqref{Eq_lowr_second} can be replaced by~\cite[Lemma~3.8.1]{ViterbiO79}
	 \begin{align}
	 L/M  & \geq \frac{1}{4} \exp\left[\mu(s) - s\mu'(s) -s \sqrt{2\mu''(s)}\right],  \label{Eq_lowr_first_list} \\
	 P_{e_{\max}} & \geq \frac{1}{4} \exp\left[\mu(s) + (1-s) \mu'(s) -(1-s) \sqrt{2\mu''(s)}\right]. \label{Eq_lowr_second_list}
	 \end{align}
	 Let $\CR_1 := \frac{ \log (M/L)}{E_{1}}$ and $\tilde{\CR}_1 := \frac{ \log (\tilde{M} /L)}{E_{1}}$. From the definition of $\tilde{M} $, we have $\tilde{M} \leq M \leq 2\tilde{M}$. Consequently,
	 \begin{align}
	 \CR_1 - \frac{1}{E_1} \leq  \tilde{\CR}_1 \leq \CR_1. \label{Eq_rate_low_high}
	 \end{align}
	  By following the steps that led to \eqref{Eq_prob_low_bnd1}, we thus obtain
	 \begin{align}
	 P_e(E_{1}, \tilde{M}, L) &\geq \exp\left[- E_{1}\left(\left(\sqrt{ \frac{1}{N_0}}- \sqrt{ \frac{\tilde{\CR}_1}{\log e}}\right)^2 + O\left(\frac{1}{\sqrt{E_1}}\right) \right) \right] \notag \\
	 & = \exp\left[- E_{1}\left(\left(\sqrt{ \frac{1}{N_0}}- \sqrt{ \frac{\CR_1}{\log e}}\right)^2 + O\left(\frac{1}{\sqrt{E_1}}\right) \right) \right]. \label{Eq_prob_low_bnd2_list}
	 \end{align}
	 
	 To lower-bound $P_e(E_{2}, L+1)$, we apply \eqref{Eq_low_any_rate} with $\CR_2 := \frac{\log  (L+1)}{E_2}$ to obtain
	 \begin{align}
	 P_e(E_{2}, L+1) & \geq \exp\left[-E_2\left(\frac{1}{2N_0}\left(1+ \frac{1}{\frac{2^{\CR_2 E_2}}{2}-1}\right) + O\left(\frac{\ln E_2 }{E_2}\right)\right)\right]. \label{Eq_low_rate_lowr_bnd}
	 \end{align}
	 Let \[\Xi_1(\CR_1):=\left(\sqrt{ \frac{1}{N_0}}- \sqrt{ \frac{\CR_1}{\log e}}\right)^2\] and \[\Xi_2(\CR_2) := \frac{1}{2N_0}\left(1+ \frac{1}{\frac{2^{\CR_2 E_2}}{2}-1}\right).\] Then, by substituting \eqref{Eq_prob_low_bnd2_list} and \eqref{Eq_low_rate_lowr_bnd} in \eqref{Eq_prob_prod1} and using \eqref{Eq_ortho_two_code1}, we get
\begin{align}
P_e^{\bot}(E, M) & \geq  \exp\left[- E_{1}\left(\Xi_1(\CR_1)+ O\left(\frac{1}{\sqrt{E}}\right) \right) \right] \exp\left[- E_{2} \left(\Xi_2(\CR_2)+O\left(\frac{\ln E_2 }{E_2}\right)\right) \right]. \label{Eq_ortho_lowr}
\end{align}

Applying \eqref{Eq_ortho_lowr} with a clever choice of $E_1$ and $E_2$, we can show that the error exponent of $P_e^{\bot}(E, M)$ is upper-bounded by a convex combination of $\Xi_1(\CR_1)$ and $\Xi_2(\CR_2)$. Indeed, let $\lambda := \frac{E_{1}}{E}$. Then, it follows from~\eqref{Eq_ortho_lowr} that
	 \begin{align}
	 P_e^{\bot}(E, M) \geq \exp\left[ -E \left(\lambda \Xi_1(\CR_1)+ (1 - \lambda) \Xi_2(\CR_2)+ O\left(\frac{1}{\sqrt{E}}\right) \right) \right] \label{Eq_prob_prod_low}
	 \end{align}
and
	 \begin{align*}
	 \frac{\log M}{E} & = \frac{\log (M/L) + \log L}{E}\\ 
	 & = \lambda  \frac{\log (M/L)  }{E_{1}} + (1-\lambda)\frac{\log L}{E_{2}} \\
	 & = \lambda \CR_1 + (1-\lambda) \CR_2.
	 \end{align*}

Let $\CR:=\frac{\log M}{E} \leq \frac{\log e}{4N_0}$ and $\gamma_E:=\min\left\{\frac{1}{\sqrt{E}},\frac{\CR}{2}\right\}$. We conclude the proof of the lower bound in \eqref{Eq_orth_sinlg_uppr1} by applying \eqref{Eq_prob_prod_low} with
\begin{align*}
\lambda_E & = \frac{\CR - \gamma_E}{\frac{\log e}{4N_0} - \gamma_E}
\end{align*}
and the rates per unit-energy $\CR_1 = \frac{1}{4} \frac{\log e}{N_0}$ and $\CR_2 =\gamma_E$. It follows that
\begin{align}
P_e^{\bot}(E, M) & \geq \exp\left[-E \left(\frac{\lambda_E}{4N_0}+ \frac{1-\lambda_E}{2N_0}+ \frac{1-\lambda_E}{\frac{2^{\gamma_E E_2}}{2}-1}+O\left(\frac{1}{\sqrt{E}}\right)\right)\right]\notag\\ 
& = \exp\left[-E \left(\frac{1}{2N_0}- \frac{\lambda_E}{4N_0}+ \frac{1-\lambda_E}{\frac{2^{\gamma_E E_2}}{2}-1}+O\left(\frac{1}{\sqrt{E}}\right)\right)\right]. \label{Eq_low_mid_rate}
 \end{align}
Noting that $\lambda_E =  \frac{\CR}{(\log e)/4N_0} + O\left(\frac{1}{\sqrt{E}}\right) $, \eqref{Eq_low_mid_rate} can be written as
\begin{align}
P_e^{\bot}(E, M) \geq \exp \left[-E \left( \frac{1}{2 N_0} - \frac{\CR}{\log e}+ O\left(\frac{1}{\sqrt{E}}\right)\right)\right], \quad 0 < \CR \leq \frac{1}{4} \frac{\log e}{N_0}.
\end{align}
Since $O\left(\frac{1}{\sqrt{E}}\right)\to 0$ as $E\to\infty$, this proves the lower bound in \eqref{Eq_orth_sinlg_uppr1}.

\fi

\bibliography{Bibliography.bib}
\bibliographystyle{IEEEtran}	

\end{document}